\newtheorem{proposition}{Proposition} 
\newtheorem{theorem}{Theorem}
\newtheorem{lemma}{Lemma}
\newtheorem{corollary}{Corollary}
\newtheorem{definition}{Definition}
\newtheorem{remark}{Remark}
\def\Reals{\mathbb{R}}
\def\NonNegativeReals{\Reals_+}
\def\PositiveReals{\Reals_{++}}
\def\States{\mathbb{X}}
\def\Controls{\mathbb{U}}
\def\Universum{\Omega}
\def\StatesTarget{\States_{\star}}
\def\StatesUnsafe{\States_{\oslash}}
\def\StatesInitial{\States_{0}}
\def\EventRA{E_{\text{RA}}}
\def\EventS{E_{\text{S}}}
\def\batch{\mathbb{B}}
\def\Algebra{\mathcal{F}}
\def\NaturalAlgebra{\mathcal{N}}
\def\StatesTargetFamily{\mathcal{X}_{\star}}
\def\StatesUnsafeFamily{\mathcal{X}_{\oslash}}
\def\StatesInitialFamily{\mathcal{X}_{0}}
\def\Cells{\mathcal{C}}
\def\UnverifiedCells{\mathcal{U}}
\newcommand{\BorelAlgebra}[1]{\mathcal{B}(#1)}
\newcommand{\Polish}[1]{\bigl(#1, \BorelAlgebra{#1}\bigr)}
\def\NaturalFiltration{\mathfrak{N}}
\def\filtration{\mathfrak{F}}
\def\drift{f}
\def\diffusion{g}
\def\policy{\pi}
\def\certificate{V}
\def\generatorBoundRA{\zeta}
\def\generatorBoundS{\xi}
\def\WienerProcess{W}
\def\stateProcess{X}
\newcommand{\trajectoryProcess}[4]{\stateProcess^{#2,#3}_{#1,#4}}
\newcommand{\operatorStyle}[1]{\mathsf{#1}}
\def\dd{\mathrm{d}} 
\def\generator{\operatorStyle{G}}
\newcommand{\Hessian}[1]{\operatorStyle{H}_{#1}}
\newcommand{\gradient}[1]{\nabla_{\!#1}}
\DeclareMathOperator{\expect}{\mathbb{E}}
\DeclareMathOperator{\probability}{\mathbb{P}}
\DeclareMathOperator{\interior}{\operatorStyle{int}}
\DeclareMathOperator{\diag}{\operatorStyle{diag}}
\newcommand{\indicator}[1]{\operatorStyle{I}_{#1}}
\def\theStoppingTime{\psi}
\def\probRA{\varepsilon}
\def\probS{\delta}
\def\upperRAbound{\beta_{\text{RA}}}
\def\lowerRAbound{\alpha_{\text{RA}}}
\def\upperSbound{\beta_{\text{S}}}
\def\lowerSbound{\alpha_{\text{S}}}
\def\estProbRA{\hat\probRA}
\def\estProbS{\hat\probS}
\def\estUpperRAbound{\hat\beta_{\text{RA}}}
\def\estLowerRAbound{\hat\alpha_{\text{RA}}}
\def\estUpperSbound{\hat\beta_{\text{S}}}
\def\estLowerSbound{\hat\alpha_{\text{S}}}
\def\placeholder{{\,\cdot\,}}
\newcommand{\define}[2]{#1 \triangleq #2}
\def\pndAngle{\theta}
\def\pndVelocity{\varphi}
\def\pndGravity{g}
\def\pndMass{m}
\def\pndFriction{b}
\def\pndLength{L}
\def\pndMaxTorque{M}
\title{Neural Continuous-Time Supermartingale Certificates}
\author{
    Grigory Neustroev\textsuperscript{\rm 1},
    Mirco Giacobbe\textsuperscript{\rm 2},
    Anna Lukina\textsuperscript{\rm 1}
}
\begin{document}

\maketitle

\begin{abstract}
We introduce for the first time a neural-certificate framework for continuous-time stochastic dynamical systems.
Autonomous learning systems in the physical world demand continuous-time reasoning, yet existing learnable certificates for probabilistic verification assume discretization of the time continuum.
Inspired by the success of training neural Lyapunov certificates for deterministic continuous-time systems and neural supermartingale certificates for stochastic discrete-time systems, we propose a framework that bridges the gap between continuous-time and probabilistic neural certification for dynamical systems under complex requirements.
Our method  combines machine learning and symbolic reasoning to produce formally certified bounds on the probabilities that a nonlinear system satisfies specifications of reachability, avoidance, and persistence.
We present both the theoretical justification and the algorithmic implementation of our framework and showcase its efficacy on popular benchmarks.
\end{abstract}

\begin{links}
    \link{Code}{https://doi.org/10.5281/zenodo.14537003}
\end{links}

\section{Introduction}

Ensuring safety is paramount to the alignment and governance of autonomous learning systems.
Providing verifiable guarantees of compliance with safety requirements is essential for building trust between a system, its users, and the regulators.
This is especially critical when the system is deployed in the physical world, where it can cause harm.
A central challenge in designing autonomous learning systems is developing control policies that ensure the desired spatial and temporal behavior.
These objectives include reaching a specific target region (reachability), consistently avoiding unsafe regions (avoidance), reaching and remaining within a safe region (persistence), and combinations thereof.

Autonomous systems often face multiple probabilistic uncertainties, leading to noisy model parameters or stochastic system evolution.
Safe control design for the most general models---nonlinear continuous-time stochastic dynamical systems---provides \emph{formal} guarantees for systems operating in uncertain physical environments.
Powerful techniques for achieving this are based on Lyapunov functions, which serve as a potential-energy characterization for differential equations and, under certain conditions, become formal proof certificates for desired behavioral properties.
Initially introduced for the asymptotic stability analysis of deterministic systems with respect to an equilibrium \cite{kalman1960control}, Lyapunov functions have been generalized to stochastic dynamical systems \cite{blumenthal1968markov} and extended into formal certificates for reachability, avoidance (known as barrier certificates), and persistence properties \cite{prajna2004stochastic}.
 
\begin{figure}[t]
    \centering
    \includegraphics[width=\columnwidth]{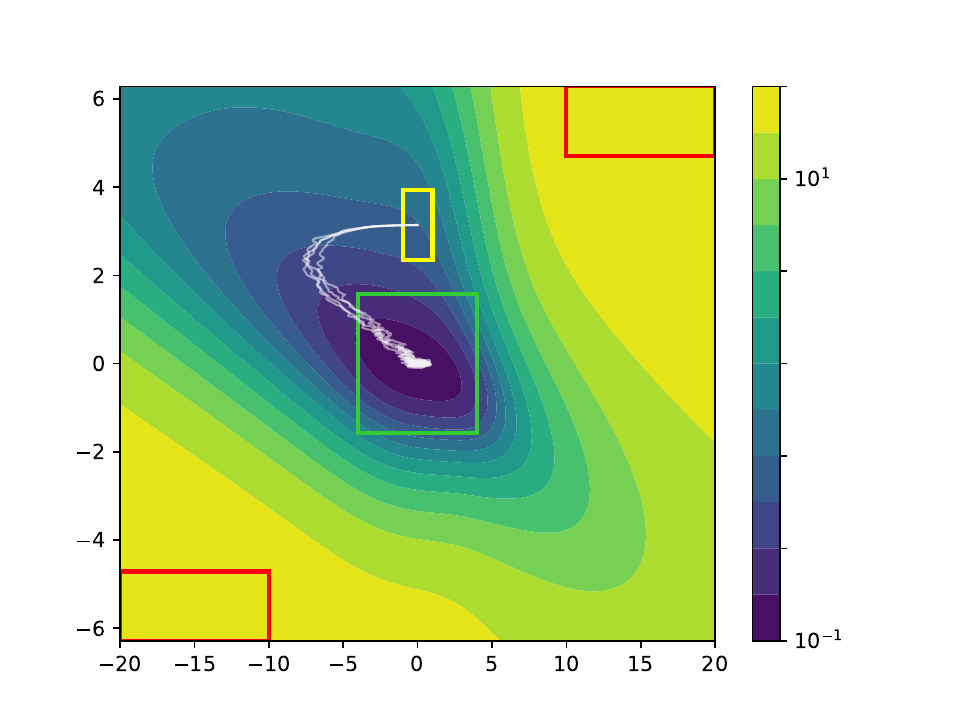}
    \caption{%
      A neural supermartingale certificate for the continuous-time stochastic inverted pendulum.
      Darker colors indicate higher probability for the system trajectories (sampled in white) to reach and remain in the green rectangle, while avoiding the red rectangles.%
    }
    \label{fig:certificate-example}
\end{figure}

Computing proof certificates analytically is a hard problem for nonlinear systems, which has traditionally required manual effort and considerable expertise.
This approach is feasible when a system  design is familiar to an engineer.
However, for autonomous learning systems with embedded neural networks, whose internals are often not human-comprehensible, the construction of proof certificates must be automated.
The best-known algorithms for their construction are based on sum-of-square programming \cite{papachristodoulou2002construction,topcu2008local},
and are limited to polynomial systems.
Systems with dynamics involving trigonometric, exponential, or piecewise functions, which include neural building blocks, are beyond their reach.
Therefore, systems that embed neural networks necessarily require more flexible techniques.

Neural networks as control policies for nonlinear dynamical systems have long proven extreme efficacy when in the presence of complex tasks.
This has inspired the introduction of Lyapunov certification into learning \cite{richards2018lyapunov,dawson2023safe}, and the growing demand for formally verified AI has motivated their coupling with symbolic reasoning to guarantee soundness of the certificate \cite{chang2019neural,abate2020formal}.
These methods exploit the expressive capabilities of neural networks to represent proof certificates---referred to as \emph{neural certificates}---and build upon the fact that algorithmically verifying neural networks is increasingly  feasible~\cite{DBLP:conf/cav/KatzBDJK17,DBLP:conf/sp/GehrMDTCV18,DBLP:conf/cav/WuIZTDKRAJBHLWZKKB24}.
This is particularly true in control applications, where the neural networks involved are typically much smaller compared to those used in computer vision or natural language processing~\cite{DBLP:conf/cav/TranYLMNXBJ20,DBLP:conf/ijcai/BacciG021,DBLP:journals/aamas/AkintundeBKL22}.

Neural certificates for deterministic continuous-time dynamical systems implement Lyapunov-like conditions that are both amenable to gradient descent for learning and compatible with symbolic reasoning for formal verification.
While Lyapunov conditions are well-defined for deterministic systems and worst-case reasoning, they fall short in quantifying the probability of compliance with a requirement for stochastic systems.
Moreover, worst-case reasoning may consider measure-zero trajectories as plausible, imposing sure satisfaction of a requirement, which is unreasonably conservative in many practical scenarios.
Due to the fundamental differences between (deterministic) Newton--Leibnitz calculus and (stochastic) It\^o calculus, the safety assurance of stochastic dynamical systems demands quantitative probabilistic reasoning techniques.

Probabilistic reasoning for the safety assurance of stochastic systems is grounded in martingale theory, which establishes conditions for their compliance with behavioral requirements \cite{prajna2007framework}.
More specifically, proof rules for probabilistic verification rely on the construction of supermartingale processes.
Neural certificates based on supermartingale-like conditions have been successful in proving termination for probabilistic programs~\cite{abate2021learning,DBLP:conf/concur/AbateEGPR23} and, analogously, probabilistic reachability, avoidance, and reach-avoidance for discrete-time stochastic dynamical systems \cite{ansaripour2023learning, DBLP:journals/csysl/MathiesenCL23,vzikelic2023learning,badings2024learning}.
However, all existing results on \emph{neural supermartingale certificates} assume discretization of the time continuum, which is suitable for computer simulations of dynamical systems and experiments \textit{in silico}, but provide no formal guarantees for systems in continuous time.

We present for the first time a framework for the construction of neural supermartingale certificates in continuous time, with formal guarantees for reachability, avoidance and persistence requirements.
We provide 
(1) a quantitative proof rule for continuous-time reach-avoid-stay properties, 
(2) an algorithm for training neural supermartingale certificates from samples of drift and diffusion of the stochastic dynamics, and
(3) a symbolic technique for the formal verification of our neural certificates based on interval-bound propagation.
Our method produces a quantitative neural certificate that yields a sound lower bound on the probability of compliance of the system with the given requirement.

Figure~\ref{fig:certificate-example} illustrates a valid quantitative neural certificate for a stochastic inverted pendulum subject to a neural controller, depicting its likelihood of avoiding the red rectangles while reaching and remaining within the green rectangle.
It can be observed that any trajectory initialized in a given state satisfies the property with at least the probability indicated for that state.
Specifically, our certificate shows that every trajectory initialized within the yellow rectangle satisfies the reach-stay-avoidance property with high probability.

We demonstrate the effectiveness of our method on well-known continuous-time stochastic system models from the literature.
Our examples include systems with dynamics that incorporate trigonometric functions and neural controllers, which are beyond the capabilities of traditional algorithms for provably safe control of continuous-time stochastic dynamical systems.
Our results establish a foundation for unsupervised and practically effective safeguarded stochastic control based on neural certificates, offering quantitative guarantees for continuous-time dynamical models.

In summary, our contribution is twofold.
First, we introduce a comprehensive methodology for formally guaranteed neural reach-avoid-stay certification of continuous-time stochastic dynamical systems controlled by a given policy.
Second, we developed a prototype based on our technique and demonstrated its practical effectiveness on examples that are beyond the capabilities of traditional methods for automated provably safe control.

\section{Problem Statement}

Consider a probability space \((\Universum, \Algebra,  \probability)\) with filtration \(\filtration = (\Algebra_t)_{t \geq 0}\).
Let \(\Polish{\Reals^l}\) and \(\Polish{\Controls}\) be Polish spaces of states and actions (here \(\BorelAlgebra{\placeholder}\) denotes the Borel \(\sigma\)-algebra, i.e., the smallest \(\sigma\)-algebra that contains the open intervals).
In other words, we assume that the states are real-valued vectors of length \(l < \infty\), but controls can be anything Borel-measurable.

Consider a continuous-time dynamical system described by the following stochastic differential equation (SDE):
\begin{equation}
  \dd \stateProcess_t
  = \drift\bigl(t, \stateProcess_t, u\bigr)\,\dd t
  + \diffusion\bigl(t, \stateProcess_t, u\bigr)\,\dd \WienerProcess_t
  \label{eq:general_sde}
\end{equation}
where \(\drift: \NonNegativeReals \times \Reals^l \times \Controls \to \Reals^{l}\) and \(\diffusion: \NonNegativeReals \times \Reals^l \times \Controls \to \Reals^{l} \times \Reals^{k}\) are vector- and matrix-valued \emph{drift} and \emph{diffusion} functions, \(\define{\NonNegativeReals}{[0, \infty)}\);
they represent the deterministic and stochastic parts of the dynamics.
\((\WienerProcess_t)_{t\geq 0}\) is a \(k\)-dimensional Wiener process with independent coordinates.

We assume that the state space \(\Reals^l\) contains a subset \(\States\) of interest that we restrict our attention to.
For example, the system can be known to remain in \(\States\) \emph{almost surely (a.s.)}, that is, with probability one.

The control signal \(u = \policy\bigl(t, \stateProcess_t\bigr)\) is given by a function of time and state known as a \emph{policy} \(\policy : \NonNegativeReals \times \Reals^l \to \Controls\).
When a policy is fixed, we denote
\begin{align*}
  \define{\drift_\policy(t, x)&}{\drift\bigl(t, x, \policy(t, x)\bigr),}
  \define{&\diffusion_\policy(t, x)&}{\diffusion\bigl(t, x, \policy(t, x)\bigr).}
\end{align*}

We use \((\trajectoryProcess{\policy}{s}{x}{t})_{t\geq s}\) for a stochastic state process issuing from a state \(x \in \Reals^l\) at time \(s \in \NonNegativeReals\) when the system is controlled by a policy \(\policy\).
By its definition,
\begin{equation}\label{eq:trajectory-origin}
  \trajectoryProcess{\policy}{t}{x}{t} \equiv x.
\end{equation}

For a given policy \(\policy\), our goal is to design an algorithm for verifying that the system's stochastic dynamics satisfy the conditions of the following definition.

\begin{definition}\label{df:reach-avoid-specification}
A \emph{probabilistic reach-stay-avoid (RAS) specification} is a tuple \((\StatesTargetFamily, \StatesUnsafeFamily, \StatesInitial, \probRA, \probS)\), where 
\begin{itemize}
  \item \(\define{\StatesTargetFamily}{\bigl(\StatesTarget(t)\bigr)_{t\geq 0}}\) is a family of target sets \(\StatesTarget(t) \subset \States\),
  \item \(\define{\StatesUnsafeFamily}{\bigl(\StatesUnsafe(t)\bigr)_{t\geq 0}}\) is a family of unsafe sets \(\StatesUnsafe(t) \subset \States\),
  \item \(\StatesInitial\) is a set of initial states,
  \item \(\probRA \in [0,1)\) is an avoidance probability, and
  \item \(\probS  \in [0,1)\) is a stay probability.
\end{itemize}
A policy \(\policy\) \emph{satisfies} this specification if for every initial state \(x_0 \in \StatesInitial\) there exists a.s. a finite time \(\tau < \infty\) such that the probability to reach the target set without ever entering the unsafe set by the time \(\tau\) is at least \(\probRA\), and the probability to remain in the target set afterwards is at least \(\probS\);
in other words, if \(\probability[\EventRA] \geq \probRA\) and \(\probability[\EventS] \geq \probS\) for the events
\begin{align}
  \define{\EventRA &}{\bigl\{\trajectoryProcess{\policy}{0}{x_0}{\tau} \in \StatesTarget(\tau) \wedge \forall t < \tau: \trajectoryProcess{\policy}{0}{x_0}{t} \notin \StatesUnsafe(t)\bigr\},}\label{eq:reach-avoid}\\
  \define{\EventS &}{\bigl\{\forall t \geq \tau: \trajectoryProcess{\policy}{0}{x_0}{t} \in \StatesTarget(t)\bigr\}.}\label{eq:stay}
  \end{align}
\end{definition}

The dependence of the initial, target, and unsafe sets on time allows us to tackle problems that are not time-homogenous.
For example, the unsafe set \(\StatesUnsafe(t)\) can move or grow over time or the policy can evolve during training.

\subsection{Model Assumptions \& Further Notation}

A policy is assumed to be \emph{admissible}, that is, the coefficients \(\drift_\policy(t, x)\) and \(\diffusion_{\policy, r}(t, x)\) are Lipschitz-continuous, have continuous derivatives with respect to \(x\) which are bounded uniformly in \(t > 0\).
We use \(\diffusion_{\policy, r}, r = 1, \dots, k\) to denote the column vectors of \(\diffusion_\policy\).

The set \(\States\) is compact.
For \(\States_\circledast \in \{\StatesTarget, \StatesUnsafe\}\) the sets
\begin{equation}\label{eq:time-extended-sets}
  \define{\mathcal{X}_\circledast}{\bigl\{(t, x) \bigm| t \geq 0 \wedge x \in \States_\circledast(t) \bigr\}}
\end{equation}
are Borel.
With a slight abuse of notation, we denote these sets \(\StatesTargetFamily\) and \(\StatesUnsafeFamily\).
The initial set \(\StatesInitial\) is compact and belongs to the interior of safe states at time \(0\), \(\StatesInitial \subset \interior \bigl(\States \setminus \StatesUnsafe(0)\bigr)\).
The sets \(\StatesTarget(t)\) belong to the interiors of safe sets, \(\StatesTarget(t) \subset \interior \bigl(\States \setminus \StatesUnsafe(t)\bigr)\).

Without loss of generality, we assume that the equilibrium of the system is at zero, \(x_\star \equiv 0\) and always belongs to the set \(\StatesTarget(t)\);
similarly, \(u_\star \equiv 0\).
Moreover, \(\policy(t,0) \equiv 0\), \(\drift_\policy(t,0) \equiv 0\), and \(\diffusion_{\policy,r}(t,0) \equiv 0\).

The design of our certificate defines its properties within different sublevel sets.
A \emph{sublevel set} \(L^-_\gamma(\certificate)\) of level \(\gamma\) (a \emph{sub-\(\gamma\) set}) of a function \(\certificate(t, x)\) is defined as
  \[
    \define%
    {L^-_\gamma(\certificate)}%
    {\bigl\{(t, x) \in \NonNegativeReals \times \States \bigm| \certificate(t, x) \leq \gamma \bigr\}.}
  \]

\begin{remark}
  Note that we restrict sublevel sets to subsets of a bounded domain \(\States\) ensuring that they are bounded as well.
\end{remark}

\subsection{A Primer on Stochastic Calculus}

Under our assumptions, the problem \eqref{eq:general_sde} has a solution that is a so-called Feller--Dynkin process\footnote{This roughly means that it is right-continuous, exhibits the strong Markov property, and has time-independent transitions.}.
The dynamics of a Feller--Dynkin process can be described by a linear operator \(\generator_\policy\) called its \emph{infinitesimal generator}, which in our case admits the following form \cite{khasminskii2011stability}:
\begin{equation}\label{eq:generator}
  \define{\generator_\policy}{ \tfrac{\partial}{\partial t} + \sum_{i=1}^l \drift_\policy(t, x_i)\tfrac{\partial}{\partial x_i} + \tfrac{1}{2} \sum_{i=1}^l\sum_{r=1}^k \diffusion_{\policy, r}^2(t, x_i) \tfrac{\partial^2}{\partial x_i^2}.}
\end{equation}
Infinitesimal generator is the stochastic counterpart of gradient, describing the direction of fastest expected increase.

Next, let us recall the definition of a natural filtration and stopping time.
The \emph{natural filtration} of \(\Algebra\) with respect to a stochastic process \((\eta_t)_{t\geq 0}\) is the smallest \(\sigma\)-algebra \((\NaturalAlgebra_t)_{t\geq 0}\) on \(\Universum\) that contains all pre-images of \(\BorelAlgebra{\States}\)-measurable subsets of \(\States\) for times \(s\) up to \(t\),
\[\define{\NaturalAlgebra_t}{\sigma \bigl\{\eta_s^{-1}(A) \bigm| s \in \NonNegativeReals \wedge s \leq t \wedge A \in \BorelAlgebra{\States}\bigr\}.}\]
The natural filtration represents the information flow generated by a stochastic process:
at any moment of time \(t\), the \(\sigma\)-algebra \(\NaturalAlgebra_t\) includes all of the information generated by the process up to that point of time.

A random variable \(\tau: \Universum \to \NonNegativeReals\) is called an \emph{\(\filtration\)-stopping time} if
  \(\{\tau \leq t\} \in \Algebra_t \quad \text{for all }t\in\NonNegativeReals.\)
Intuitively, \(\tau\) is a stopping time if it is possible to determine whether it happened without looking into the future.
We can ``freeze'' a stochastic process at a stopping time (hence the name).
Formally, this results in a new version of a stochastic process called a stopped process that is defined for all \(\omega\in\Universum\) as
  \[\define{\eta_{t \wedge \tau}(\omega)}{\eta_{t}(\omega) \indicator{\{t \leq \tau(\omega)\}} + \eta_{\tau(\omega)}(\omega) \indicator{\{t > \tau(\omega)\}}}.\]
The stopped process follows the original process up to a random time \(\tau\), and after that point it stays the same.
Stopped processes are of special interest to us because they are used to construct supermartingales.

An \(\filtration\)-adapted (i.e., such that \(\eta_t\) is always \(\Algebra_t\)-measurable) stochastic process \((\eta_t)_{t\geq 0}\) is called an \emph{\(\filtration\)-supermartingale} if \(\expect[\eta_t\,|\,\Algebra_s] \leq \eta_s\) for every pair of times \(t > s \geq 0\).

\subsection{Comparison to Alternative Models}

Continuous-time deterministic certificates do not consider the quadratic term in \eqref{eq:generator} and therefore fail to fully capture the stochastics of a system defined by \eqref{eq:general_sde}.
For example, the solutions to \(\dot{x} = x\) and \(\dd \stateProcess_t = \stateProcess_t \dd t + \stateProcess_t\,\dd \WienerProcess_t\) are \(x_t = x_0 e^t\) and \(X_t = x_0 e^{t/2 + \WienerProcess_t}\); the growth rates (\(1\) and \(1/2\)) are different.
Similarly, discrete-time stochastic systems represent the dynamics of a continuous-time system only approximately. The solutions to \(x_{t+1} - x_{t} = -\tfrac{1}{2}x_{t} + \eta_t\) where \(\eta_t\sim \mathcal{N}(0,1)\) (i.i.d.) and \(\dd \stateProcess_t = -\tfrac{1}{2}\stateProcess_t \dd t + \dd \WienerProcess_t\) are \(x_{t} = 2^{-t} x_0 + \sum_{i=1}^{t} 2^{i-t} \eta_{i-1} = e^{-t \ln 2} x_0 + \tfrac{2\sqrt{3}}{3}\sqrt{1-4^{-t}} \tilde{\eta}_t\), \(\tilde\eta_t\sim \mathcal{N}(0,1)\) and \(X_t = e^{-t/2}x_0 + \WienerProcess_{1-e^{-2t}}\); again, the decay rates (\(\ln 2\) and \(0.5\)) differ, leading to different dynamics.

These examples show that existing certificates cannot be directly applied to the continuous-time stochastic setting.
To satisfy a RAS specification, we need to find a new type of certificate and prove its validity for our problem.

\section{Reach-Avoid-Stay Certificates}

\begin{definition}\label{df:ras-C}
  A function \(\certificate(t, x) : \NonNegativeReals \times \Reals^l \to \Reals\) is called a \emph{reach-avoid-stay certificate (RAS-C)} if in the domain \(\PositiveReals \times \States\) it is bounded and twice continuously differentiable with respect to \(x\) and continuously differentiable with respect to \(t\), and, moreover, the following conditions hold, given the sets \(\States\) and \(\StatesInitial\), the families of sets \(\StatesUnsafeFamily\), and \(\StatesTargetFamily\), as well as some positive constants \(\lowerSbound < \upperSbound < \lowerRAbound < \upperRAbound\):
  \begin{enumerate}
    \item\label{cnd:nonneg} \emph{nonnegativity:} \(\certificate(t, x) \geq 0\) for all \(t \geq 0\) and \(x \in \States\);
    \item\label{cnd:init} \emph{initial condition:} \(\certificate(0, x_0) \leq \lowerRAbound\) for all \(x_0 \in \StatesInitial\);
    \item\label{cnd:safe} \emph{safety condition:} \(\certificate(t, x) \geq \upperRAbound\) for all \((t, x) \in \StatesUnsafeFamily\);
    \item\label{cnd:decr} \emph{decrease condition:} there exists \(\generatorBoundRA(t)\) satisfying
        \begin{equation}\label{eq:zeta-integral}
          \lim_{t \to \infty}\int_0^t \generatorBoundRA(s)\,\dd s = \infty\quad\text{and}\quad\generatorBoundRA(t) > 0,
        \end{equation}
        and
        \(
          \generator_\policy \certificate(t, x) \leq -\generatorBoundRA(t) 
        \)
        for all \((t, x) \in L^-_{\upperRAbound}(\certificate)\setminus \interior\StatesTargetFamily\).
    \item\label{cnd:goal} \emph{goal condition:} in the target set \(\StatesTargetFamily\) there exists a sub-\(\upperSbound\) set, \(L^-_{\upperSbound}(\certificate)\);
    \item\label{cnd:stay} \emph{stay condition:} there exists \(\generatorBoundS(t)\) satisfying
        \begin{equation}\label{eq:gamma-integral}
          \lim_{t \to \infty}\int_0^t \generatorBoundS(s)\,\dd s = \infty\quad\text{and}\quad\generatorBoundRA(t) > 0,
        \end{equation}
        and
        \(
          \generator_\policy \certificate(t, x) \leq -\generatorBoundS(t)
        \)
        for all \((t, x) \in \bigl(\StatesTargetFamily \cap L^-_{\upperRAbound}(\certificate)\bigr) \setminus \interior L^-_{\lowerSbound}(\certificate)\).
  \end{enumerate}
\end{definition}

\begin{remark}
  Conditions \eqref{eq:zeta-integral} and \eqref{eq:gamma-integral} are satisfied by any positive constants \(\generatorBoundRA\) and \(\generatorBoundS\).
  We use a more general definition, because sometimes it is possible to derive better bounds based on the domain knowledge of the problem at hand.
\end{remark}

At the core of the definition is the following idea: every process issuing within the sub-\(\lowerRAbound\) set \(L^-_{\lowerRAbound}(\certificate)\) is unlikely to leave the sub-\(\upperRAbound\) set \(L^-_{\upperRAbound}(\certificate)\), and similarly for \(\upperSbound\) and \(\lowerSbound\).
Based on this property, we can show that the function \(\certificate\) decreases in expectation along the sample paths of the state process until it reaches the target.
If the value of \(\certificate\) is initially at most \(\lowerRAbound\), we have not left the sub-\(\upperRAbound\) set with some high probability (to be defined later) and thus have avoided the unsafe states.
Similarly, if within the target there exists a subset where the values of \(\certificate\) are sufficiently low, and they are expected to decrease outside of this set, the process will be ``trapped'' within this smaller subset with high probability (again, to be defined later).
These properties motivate the conditions of Definition~\ref{df:ras-C}, and are outlined in Figure~\ref{fig:sets-example}.
This intuitive explanation of reach-stay-avoid certification can be stated rigorously as the following theorem.

\begin{figure}[tb]
    \centering
    \includegraphics{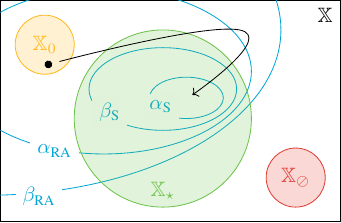}
    \caption{%
    A (time-homogenous) example of the sets of Definition~\ref{df:ras-C}.
    A sample trajectory \includegraphics{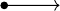} first reaches the target set \(\StatesTarget\) without staying, and then reaches it again and stays.
    Our goal is to find the latter moment (with high probability).%
    }
    \label{fig:sets-example}
\end{figure}

\begin{theorem}[RAS-C certifies an RAS specification]\label{th:main-result}
  A policy \(\policy\) satisfies an RAS specification \((\StatesTargetFamily, \StatesUnsafeFamily, \StatesInitial, \probRA, \probS)\) if there exists a reach-avoid-stay certificate such that
  \begin{align*}
    \probRA &= 1 - \tfrac{\lowerRAbound}{\upperRAbound}
    &\text{and}&
    &\probS &= 1 - \tfrac{\lowerSbound}{\upperSbound}.
  \end{align*}
\end{theorem}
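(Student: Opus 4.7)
The plan is to prove Theorem~\ref{th:main-result} in two halves by selecting the random time $\tau$ from Definition~\ref{df:reach-avoid-specification} to be the first hitting time $\tau := \inf\{t \geq 0 : \certificate(t, \trajectoryProcess{\policy}{0}{x_0}{t}) \leq \lowerSbound\}$ of the inner sublevel set of the certificate. Both halves apply Dynkin's formula to the real-valued process $\certificate(t, \trajectoryProcess{\policy}{0}{x_0}{t})$, use the sign of $\generator_\policy \certificate$ supplied by conditions~\ref{cnd:decr} and~\ref{cnd:stay}, and invoke Doob's maximal inequality for nonnegative supermartingales.

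For the reach-avoid half, introduce the auxiliary exit stopping time $\theStoppingTime_U := \inf\{t \geq 0 : \certificate(t, X_t) > \upperRAbound\}$. On the stochastic interval $[0, \tau \wedge \theStoppingTime_U]$ the trajectory lies in $L^-_{\upperRAbound}(\certificate) \setminus \interior L^-_{\lowerSbound}(\certificate)$, and conditions~\ref{cnd:decr} and~\ref{cnd:stay} combined give $\generator_\policy \certificate \leq -\min\{\generatorBoundRA(t), \generatorBoundS(t)\} < 0$ pointwise on this region. Dynkin's formula thus shows that $\certificate(\cdot \wedge \tau \wedge \theStoppingTime_U, X_\cdot)$ is a bounded nonnegative $\filtration$-supermartingale. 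The integrability conditions~\eqref{eq:zeta-integral} and~\eqref{eq:gamma-integral} together with $\certificate \geq 0$ force $\tau \wedge \theStoppingTime_U < \infty$ almost surely, and Doob's maximal inequality paired with the initial condition~\ref{cnd:init} then yields $\probability[\theStoppingTime_U \leq \tau] \leq \certificate(0, x_0)/\upperRAbound \leq \lowerRAbound/\upperRAbound$. On the complementary event the safety condition~\ref{cnd:safe} excludes the unsafe set on $[0, \tau)$ while the goal condition~\ref{cnd:goal} puts $X_\tau$ in $\StatesTargetFamily$, establishing $\probability[\EventRA] \geq 1 - \lowerRAbound/\upperRAbound$.

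For the stay half, the strong Markov property at $\tau$ restarts the analysis from $\certificate(\tau, X_\tau) \leq \lowerSbound$. Since $L^-_{\upperSbound}(\certificate) \subset \StatesTargetFamily$ by condition~\ref{cnd:goal}, it suffices to bound $\probability[\theStoppingTime^\star < \infty \mid \Algebra_\tau] \leq \lowerSbound/\upperSbound$, where $\theStoppingTime^\star := \inf\{t \geq \tau : \certificate(t, X_t) \geq \upperSbound\}$. Optional stopping applied to $\certificate(\cdot \wedge \theStoppingTime^\star, X_\cdot)$ combined with Fatou's lemma in the $T \to \infty$ limit should then produce $\upperSbound \cdot \probability[\theStoppingTime^\star < \infty \mid \Algebra_\tau] \leq \certificate(\tau, X_\tau) \leq \lowerSbound$, completing the proof.

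The main obstacle is that condition~\ref{cnd:stay} only enforces $\generator_\policy \certificate \leq -\generatorBoundS$ \emph{outside} $L^-_{\lowerSbound}(\certificate)$, so after $\tau$ the process $\certificate(t, X_t)$ is not globally a supermartingale---its drift is unconstrained inside the sub-$\lowerSbound$ set and may point upward. To bridge this gap I would work with a suitable truncation of $\certificate$ and invoke an It\^o--Tanaka-type formula (or a smooth approximation) to absorb the unconstrained contribution inside $L^-_{\lowerSbound}(\certificate)$ into a controlled correction term. The sample-path continuity of the Feller--Dynkin solution guaranteed under our admissibility assumptions ensures that every excursion of $\certificate$ above $\lowerSbound$ begins exactly at $\lowerSbound$; this geometric fact is what keeps the truncation argument tight and restores the supermartingale structure needed for Doob's maximal inequality to deliver the stated stay bound.
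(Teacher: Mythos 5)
Your reach--avoid half is essentially the paper's own argument in different clothing: your \(\tau \wedge \theStoppingTime_U\) coincides with the paper's single exit time \(\theStoppingTime\) from \([\lowerSbound, \upperRAbound)\) in \eqref{eq:the-interval-stopping-time}; the claim that the stopped certificate process is a bounded nonnegative supermartingale is Lemma~\ref{lm:rasm-is-supermartingale} (proved there via Khasminskii's first-exit lemma rather than Dynkin's formula, but these are interchangeable); almost-sure finiteness of the exit time is the recurrence criterion; and your maximal-inequality step gives the same \(\lowerRAbound/\upperRAbound\) bound that the paper obtains from optional stopping plus Chebyshev. Two small slips: the two generator conditions you combine are the decrease and \emph{stay} conditions (not the goal condition), and the correct combined bound on the union of their domains is \(-(\generatorBoundRA \wedge \generatorBoundS)\) --- the minimum, which is indeed what you wrote. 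This half is sound and matches the paper.

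The stay half is where your proposal has a genuine gap. You correctly observe that the stay condition constrains \(\generator_\policy \certificate\) only \emph{outside} \(\interior L^-_{\lowerSbound}(\certificate)\), so the post-\(\tau\) process \(\certificate(t, X_t)\) is not a supermartingale; but your proposed repair does not close this. First, It\^o--Tanaka applied to \(\max(\certificate, \lowerSbound)\) produces a local-time term at the level \(\lowerSbound\) that is \emph{nondecreasing}, i.e., it adds a nonnegative singular drift: truncation makes the supermartingale property worse, not better. Second, the excursion picture you appeal to gives an escape probability of at most \(\lowerSbound/\upperSbound\) \emph{per upcrossing of} \(\lowerSbound\), but the unconstrained dynamics inside \(L^-_{\lowerSbound}(\certificate)\) may drive the process back up to \(\lowerSbound\) arbitrarily many times, and a union over infinitely many excursions, each with escape probability bounded only by \(\lowerSbound/\upperSbound\), does not yield an overall bound of \(\lowerSbound/\upperSbound\) --- it can tend to one. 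To genuinely close this you would need either a non-increase condition on all of \(\StatesTargetFamily \cap L^-_{\upperSbound}(\certificate)\) including the inner sublevel set, or a quantitatively bounded increase inside \(L^-_{\lowerSbound}(\certificate)\) feeding into a different probability bound. For context: the paper's own proof of the stay half applies the maximal inequality directly to the time-shifted process \((\certificate(t, X_t))_{t \geq \theStoppingTime}\), which tacitly assumes the supermartingale property holds inside \(L^-_{\lowerSbound}(\certificate)\) as well --- so you have identified a step the paper does not elaborate either, but identifying the obstacle is not the same as supplying the missing argument.
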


\begin{remark}
  Note that there are two probabilities \(\probRA\) and \(\probS\) in the definition of the RAS specification, but four numbers in the definition of RAS-C.
  Since the probabilities of an RAS specification are defined by the ratios, we can fix either \(\lowerRAbound\) or \(\upperRAbound\) beforehand, and similarly for \(\lowerSbound\) or \(\upperSbound\).
\end{remark}

We prove Theorem~\ref{th:main-result} in the technical appendix available in the extended version of the paper. We present a sketch of the proof here.

\begin{proof}[Proof sketch for Theorem~\ref{th:main-result}]
First, we construct a random variable representing the first time the value of RAS-C for a given sample path leaves the interval \([\lowerSbound, \upperRAbound)\):
\begin{equation}\label{eq:the-interval-stopping-time}
  \define{\theStoppingTime}{\textstyle{\inf_{t \geq 0}} \bigl\{t \bigm| \certificate(t, \trajectoryProcess{\policy}{0}{x_0}{t}) \notin [\lowerSbound, \upperRAbound) \bigr\}.}
\end{equation}
Next, we show that it is almost surely a finite stopping time adapted to the state process.
Then, we construct the following stopped stochastic process that represents the evolution of the certificate and show that it is a supermartingale.

\begin{definition}
  Given the natural filtration \(\NaturalFiltration=(\NaturalAlgebra_t)_{t\geq 0}\) of a stochastic state process \((\trajectoryProcess{\policy}{0}{x_0}{t})_{t \geq 0}\), and a \(\NaturalFiltration\)-stopping time \(\tau\), a \emph{\(\tau\)-stopped reach-stay-avoid supermartingale (\(\tau\)-RAS-SM)} is a stochastic process \((Y_t)_{t\geq 0}\) defined as
  \[\define{Y_t}{\certificate(t \wedge \tau, \trajectoryProcess{\policy}{0}{x_0}{t \wedge \tau}).}\]
\end{definition}

\begin{lemma}[the proof can be found in the technical appendix]\label{lm:rasm-is-supermartingale}
  \(\theStoppingTime\)-RAS-SM \((Y_t)_{t\geq 0}\) is an \(\NaturalFiltration\)-supermartingale.
\end{lemma}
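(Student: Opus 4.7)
The plan is to verify the three defining properties of an $\NaturalFiltration$-supermartingale for $(Y_t)_{t \geq 0}$: adaptedness, integrability, and the conditional-expectation inequality. Adaptedness is essentially bookkeeping: since $\theStoppingTime$ has already been shown to be an $\NaturalFiltration$-stopping time, $t \wedge \theStoppingTime$ is also one, the stopped state $\trajectoryProcess{\policy}{0}{x_0}{t \wedge \theStoppingTime}$ is $\NaturalAlgebra_t$-measurable, and composition with the continuous function $\certificate$ preserves measurability. Integrability is immediate from the boundedness of $\certificate$ on $\PositiveReals \times \States$ imposed by Definition~\ref{df:ras-C}.

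The substantive step is the supermartingale inequality, which I would reduce to pointwise control of the infinitesimal generator via It\^o's formula applied to $\certificate(t, \stateProcess_t)$ and then stopped at $\theStoppingTime$. This yields a decomposition
\[
  Y_t - Y_s = \int_{s \wedge \theStoppingTime}^{t \wedge \theStoppingTime} \generator_\policy \certificate(u, \stateProcess_u)\,\dd u + (M_t - M_s),
\]
where $M_t$ denotes the It\^o integral of $\gradient{x}\certificate(u, \stateProcess_u)^T \diffusion_\policy(u, \stateProcess_u)$ against $\WienerProcess$ stopped at $\theStoppingTime$. Provided $M$ is a true martingale, $\expect[M_t - M_s \mid \NaturalAlgebra_s] = 0$ and the claim reduces to showing that the drift integral is almost surely nonpositive.

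To control the drift integrand on the event $\{u < \theStoppingTime\}$, I plan a two-case argument driven by \eqref{eq:the-interval-stopping-time}: for $u < \theStoppingTime$ we have $\certificate(u, \stateProcess_u) \in [\lowerSbound, \upperRAbound)$, so $(u, \stateProcess_u) \in L^-_{\upperRAbound}(\certificate)$. If $(u, \stateProcess_u) \notin \interior \StatesTargetFamily$, condition~\ref{cnd:decr} of Definition~\ref{df:ras-C} gives $\generator_\policy \certificate(u, \stateProcess_u) \leq -\generatorBoundRA(u) < 0$; otherwise $(u, \stateProcess_u) \in \StatesTargetFamily$, and $\certificate(u, \stateProcess_u) \geq \lowerSbound$ places the point outside $\interior L^-_{\lowerSbound}(\certificate)$, so condition~\ref{cnd:stay} applies and yields $\generator_\policy \certificate(u, \stateProcess_u) \leq -\generatorBoundS(u) < 0$. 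These two cases exhaust the possibilities, so the drift integral is pathwise nonpositive, and the supermartingale inequality follows after conditioning on $\NaturalAlgebra_s$.

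The step I expect to be the main obstacle is justifying that $M$ is a genuine (not merely local) martingale, since this is what lets the stochastic-integral term drop out under the conditional expectation without introducing a separate localizing sequence. My plan for this is to leverage the admissibility of $\policy$ (Lipschitz $\diffusion_\policy$ with bounded $x$-derivatives), the $C^{1,2}$-regularity and boundedness of $\certificate$ on the bounded domain $\States$, and the fact that the stopped state trajectory is confined to $\States$: together these bound the integrand $\gradient{x}\certificate \cdot \diffusion_\policy$ uniformly on $[0, t \wedge \theStoppingTime]$, so It\^o's isometry secures square-integrability and the martingale property of $M$ outright.
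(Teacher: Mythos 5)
Your proof is correct, but it takes a genuinely more self-contained route than the paper. The paper disposes of this lemma in one line by invoking a black-box result (Proposition~\ref{pr:exit-supermartingale}, Khasminskii's first-exit lemma): a function that is $C^{1,2}$ with nonpositive generator on a bounded domain, evaluated along the process stopped at the first exit time from that domain, is a supermartingale; the proof then just notes that boundedness of $\States$ supplies the bounded domain. What you have written is essentially the standard proof of that cited proposition, specialized to $\theStoppingTime$: It\^o's formula, pathwise nonpositivity of the drift integral, and a boundedness/isometry argument upgrading the stochastic integral from a local to a true martingale. Your two-case control of the drift (condition~\ref{cnd:decr} outside $\interior\StatesTargetFamily$, condition~\ref{cnd:stay} inside $\StatesTargetFamily$ but outside $\interior L^-_{\lowerSbound}(\certificate)$) is exactly the argument the paper does perform, only it appears in the proof of Lemma~\ref{lm:finite-stopping-time-RAS}, where the bound $\generator_\policy\certificate(t,x) \leq -\bigl(\generatorBoundRA(t)\vee\generatorBoundS(t)\bigr)$ is established on the domain $\{\lowerSbound \leq \certificate < \upperRAbound\}$ and then reused. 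Your version buys independence from the external reference and makes explicit the integrability of the martingale part, which the citation hides; the paper's version buys brevity and reuses one generator bound for both the a.s.\ finiteness of $\theStoppingTime$ (via the recurrence criterion) and the supermartingale property. One small caveat, shared with the paper: a point with $\certificate(u,x)=\lowerSbound$ could in principle lie in $\interior L^-_{\lowerSbound}(\certificate)$ (if $\certificate$ attains a local maximum of value $\lowerSbound$ there), in which case neither condition~\ref{cnd:decr} nor condition~\ref{cnd:stay} applies; this degenerate case is glossed over in the paper's case analysis as well and does not affect the substance of your argument.
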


Finally, we use maximal inequalities for supermartingales to bound the probabilities that when the chosen moment of time is reached, both of the events used in Definition~\ref{df:reach-avoid-specification} are guaranteed to happen with the required probabilities.
\end{proof}

Moreover, it follows from the proof of the theorem that our approach can be used to certify either of the reach-avoid or stay properties by themselves due to the following corollary (proof outlined in the technical appendix).

\begin{corollary}\label{cor:ra-s-separately}
  By omitting conditions \ref{cnd:goal} and \ref{cnd:stay}, we can use an RAS-C to certify reach-avoidance requirement \eqref{eq:reach-avoid} without the staying property.
  Similarly, the stay property \eqref{eq:stay} by itself can be certified by omitting conditions \ref{cnd:init} and \ref{cnd:safe}.
\end{corollary}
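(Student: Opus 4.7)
The plan is to show that the argument behind Theorem~\ref{th:main-result} decomposes naturally into two independent parts---one certifying reach-avoidance and one certifying the stay property---each requiring only the subset of conditions of Definition~\ref{df:ras-C} named in the corollary. In each direction I would mimic the supermartingale construction used for $\theStoppingTime$-RAS-SM with a purpose-built stopping time, then apply a maximal inequality to turn the generator bound into a probability bound.

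For the reach-avoid direction I would retain conditions \ref{cnd:nonneg}--\ref{cnd:decr} and replace \eqref{eq:the-interval-stopping-time} by
\[
\define{\theStoppingTime_{\text{RA}}}{\textstyle{\inf_{t\geq 0}}\bigl\{t \bigm| \certificate(t, \trajectoryProcess{\policy}{0}{x_0}{t}) \geq \upperRAbound \text{ or } (t, \trajectoryProcess{\policy}{0}{x_0}{t}) \in \interior\StatesTargetFamily\bigr\}.}
\]
By the same Dynkin-formula argument used in Lemma~\ref{lm:rasm-is-supermartingale}, and invoking the decrease bound \ref{cnd:decr} on $L^-_{\upperRAbound}(\certificate)\setminus\interior\StatesTargetFamily$, the $\theStoppingTime_{\text{RA}}$-stopped process of $\certificate$ is a nonnegative $\NaturalFiltration$-supermartingale. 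Ville's maximal inequality then gives $\probability[\sup_t \certificate(t\wedge\theStoppingTime_{\text{RA}},\cdot) \geq \upperRAbound] \leq \certificate(0,x_0)/\upperRAbound$, which, by condition \ref{cnd:init}, is at most $\lowerRAbound/\upperRAbound$. Condition \ref{cnd:safe} converts a crossing of $\upperRAbound$ into entry of $\StatesUnsafeFamily$, so taking $\tau=\theStoppingTime_{\text{RA}}$ in \eqref{eq:reach-avoid} gives $\probability[\EventRA] \geq 1 - \lowerRAbound/\upperRAbound = \probRA$.

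For the stay direction I would retain conditions \ref{cnd:nonneg}, \ref{cnd:decr}, \ref{cnd:goal}, and \ref{cnd:stay}, take any initial state $x_0 \in L^-_{\lowerSbound}(\certificate)$, and set $\tau=0$ in \eqref{eq:stay}. Defining
\[
\define{\theStoppingTime_{\text{S}}}{\textstyle{\inf_{t\geq 0}}\bigl\{t \bigm| \certificate(t, \trajectoryProcess{\policy}{0}{x_0}{t}) \geq \upperSbound\bigr\},}
\]
the same Dynkin calculation, now using condition \ref{cnd:stay} on $(\StatesTargetFamily \cap L^-_{\upperRAbound}(\certificate))\setminus\interior L^-_{\lowerSbound}(\certificate)$, shows that the $\theStoppingTime_{\text{S}}$-stopped process of $\certificate$ is a nonnegative $\NaturalFiltration$-supermartingale. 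Ville's inequality yields $\probability[\theStoppingTime_{\text{S}} < \infty] \leq \lowerSbound/\upperSbound$, and since condition \ref{cnd:goal} places $L^-_{\upperSbound}(\certificate)$ inside $\StatesTargetFamily$, never crossing the level $\upperSbound$ implies staying in the target, giving $\probability[\EventS] \geq 1 - \lowerSbound/\upperSbound = \probS$.

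The main obstacle I foresee is establishing almost-sure finiteness of $\theStoppingTime_{\text{RA}}$ in the first half, since the ``goal'' condition is no longer available to trap the process. I expect this to follow from \ref{cnd:decr} alone: applying Dynkin's formula to the stopped process produces an expected decrease of at least $\int_0^t \generatorBoundRA(s)\,\dd s$, which by \eqref{eq:zeta-integral} diverges, incompatible with the nonnegativity guaranteed by \ref{cnd:nonneg} unless $\theStoppingTime_{\text{RA}} < \infty$ a.s.\ on every sample path with positive probability. Measurability of both new stopping times with respect to $\NaturalFiltration$ follows from standard debut-time arguments on the continuous sample paths of $\certificate(\cdot,\trajectoryProcess{\policy}{0}{x_0}{\cdot})$, inherited verbatim from the corresponding fact for $\theStoppingTime$ in the proof of Theorem~\ref{th:main-result}.
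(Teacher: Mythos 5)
Your proposal matches the paper's own (very terse) proof: the paper certifies reach-avoidance alone by rerunning the Theorem~\ref{th:main-result} argument with the stopping time $\inf\{t \mid \certificate(t,\trajectoryProcess{\policy}{0}{x_0}{t}) \notin L^-_{\upperRAbound}(\certificate)\setminus\interior\StatesTargetFamily\}$---which is exactly your $\theStoppingTime_{\text{RA}}$---and observes that the stay-only case is already contained in that proof, with finiteness of the stopping time obtained, as you anticipate, from condition~\ref{cnd:decr} and the divergence in \eqref{eq:zeta-integral} via the recurrence criterion. One wording slip: condition~\ref{cnd:safe} gives that entering $\StatesUnsafeFamily$ forces $\certificate\geq\upperRAbound$, so it is the contrapositive (no crossing implies no unsafe entry) that you actually use, not ``a crossing of $\upperRAbound$ implies entry of $\StatesUnsafeFamily$.''
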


\section{Neural Certificate Training \& Verification}

Finding certificates analytically is a challenging task. Therefore, drawing on advancements in the new field of neural certificates, we employ a deep neural network to find the certificate \(\certificate_\theta\) as a function of an input state \(x\) and trainable network parameters \(\theta\).
Our method is summarized in Algorithm~\ref{alg:training} and Figure~\ref{fig:architecture}, and explained in this section.

\begin{algorithm}[htb]
\caption{RAS Certificate Training\label{alg:training}}
\SetKw{Continue}{continue}
\KwData{%
a drift \(f\) and diffusion \(g\) of \eqref{eq:general_sde}
, a policy \(\policy\), an RAS specification \((\StatesTargetFamily, \StatesUnsafeFamily, \StatesInitial, \probRA, \probS)\), batch size \(n\), regularization multiplier \(\lambda\), number of cells per dimension \(m\), maximum cell splitting depth \(k\), frequency of verification \(q\).}
\KwResult{%
satisfaction of the specification (\texttt{Yes}/\texttt{No}).
}
Initialize the levels \(\upperRAbound > \lowerRAbound > \upperSbound > \lowerSbound > 0\)\;
create AABB cells \(\Cells\) for the verifier\;

\For{\(N\) epochs}{%
  sample a batch \(\batch\) of \(n\) points from \(\States\)\;
  find the loss \(L\) using \eqref{eq:loss-total}--\eqref{eq:loss-regularizer}\;
  train the parameters \(\theta\) using the gradient \(\gradient{\theta} L\)\;
\If{every \(q\) steps}{%
using IBP, compute lower \(\certificate_{\text{low}}(C)\) and upper \(\certificate_{\text{up}}(C)\) bounds on \(\certificate(x)\) for all cells \(C\)\;
compute \(\estProbRA \) via \eqref{eq:emp-ra-prob} and continue if \(\estProbRA < \probRA\)\;
compute \(\estProbS\) via \eqref{eq:emp-s-prob} and continue if \(\estProbS < \probS\)\;
using IBP, compute the upper bounds \([\generator_\policy \certificate]_{\text{up}}(C)\) for cells in \(\Cells_2\) defined by \eqref{eq:C_2}\;
set the verification depth \(d \gets 0\)\;
find unverified cells \(\UnverifiedCells \gets \{C \mid [\generator_\policy \certificate]_{\text{up}}(C) \geq 0\}\)\;
\While{\(\lvert \UnverifiedCells \rvert > 0\) and \(d < k\)}
{
split each cell in \(\UnverifiedCells\) into \(2^l\) smaller cells\;
increase the depth \(d \gets d + 1\)\;
compute the upper bounds \([\generator_\policy \certificate]_{\text{up}}(C)\)\;
remove cells where the bound is negative\;
}
\If{\(\estProbRA \geq \probRA\) and \(\estProbS \geq \probS\) and \(\lvert \UnverifiedCells \rvert = 0\)}{%
\Return{\texttt{Yes}}\;
}
}
}
\Return{\texttt{No}.}
\end{algorithm}

\begin{figure}[tb]
  \centering
  \includegraphics{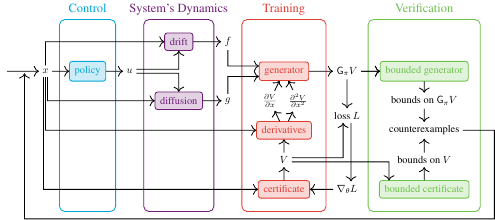}
  \caption{%
    Architecture of the neural certificate training and verification system.%
  }
  \label{fig:architecture}
\end{figure}

While the theoretical contributions separate the time \(t \in \NonNegativeReals\) and the state \(x \in \States\) variables due to different requirements on differentiability of the certificate with respect to them.
When this distinction is of no concern, we can embed the time variable \(t\) into the space vector (for example, as \(x = (e^{-t}, x_1, \dots, x_l)\) to map it into a compact interval).
Moreover, if the system is time-homogeneous, we can drop the dependence of the certificate on time entirely.
Technically, this makes the initial condition of  Definition~\ref{df:ras-C} more restrictive, as it will require that \(\certificate(0, x) \leq \lowerRAbound\) for all \((t,x) \in \NonNegativeReals \times \StatesInitial\) instead of just for \(t = 0\), but the resulting dimensionality reduction simplifies the verification process.
For these reasons, we opt to make the presentation more succinct by dropping the time variable from now on.

\subsection{Training}

This approach raises a few issues to be addressed.

First, we need to set the threshold levels.
We elect:
\begin{align*}
  \lowerRAbound &= 1, &\upperRAbound &= \tfrac{\kappa\lowerRAbound}{1 - \probRA},&
  \upperSbound &= 0.9, & \lowerSbound &= \tfrac{\upperSbound(1 - \probS)}{\kappa}.
\end{align*}
The constant \(\kappa > 1\) is chosen to force the training to find larger probabilities \(\probRA\) and \(\probS\) in order to give some slack to the verification procedure.

Next, to ensure that the certificate is twice continuously differentiable, we employ smooth activation functions such as the hyperbolic tangent and soft-plus.
To ensure that the nonnegativity condition of Definition~\ref{df:ras-C} holds, we add a non-negative transformation to the last layer.

Then, armed with Theorem~\ref{th:main-result}, we train the network as follows.
To ensure that the conditions of Definition~\ref{df:ras-C} hold, we sample batches of points from the various sets used therein.
We then check how much the points violate the conditions, and sum up the violations into the loss function.

For the initial, safety, and goal conditions, this requires evaluating the certificate which can be done with a simple forward pass through the certificate network.

We combine the decrease and stay conditions by setting the thresholds \(\generatorBoundRA(t)\) and \(\generatorBoundS(t)\) to the same constant \(\generatorBoundRA\).
Then we check both conditions over the union of their respective sets.
This requires evaluating the generator \(\generator_\policy \certificate(x)\) at each sampled point \(x\) as defined by \eqref{eq:generator}.
First, we obtain the control signal \(u = \policy(x)\) from the policy network (or a function), and compute the drift \(\drift_\policy(x, u)\) and diffusion \(\diffusion_\policy(x, u)\) of the SDE \eqref{eq:general_sde}.
Next, we find the first and second derivatives of the certificate network analytically using the method of \citet{singla2020secondorder}.
The details are provided in the technical appendix.
Since evaluation of the derivatives is costly and only required for some of the points, we separate the certificate and derivative networks.

Finally, we regularize the loss by the product of operator norms of the neural network's weight matrices \(\lVert \placeholder \rVert_{\infty\to\infty}\) defined as the maximum \(\ell_1\)-norm of the rows of a matrix.
This ensures that the weights of the certificate network remain small, making the verification task easier.

The resulting loss \(L\) is equal to
\begingroup
\allowdisplaybreaks
\begin{align}
  \define{L&}{L_{\oslash} + L_0 + L_{\star} + L_{\downarrow} + R,}\qquad\text{where}\label{eq:loss-total}\\
  \define{L_{\oslash}&}{\textstyle{\sum_{x \in \batch \cap \StatesUnsafe}}\bigl(\upperRAbound - \certificate_{\theta}(x)\bigr)^+,}\label{eq:loss-unsafe}\\
  \define{L_0&}{\textstyle{\sum_{x_0 \in \batch \cap \StatesInitial}}\bigl(\certificate_{\theta}(x_0) - \lowerRAbound\bigr)^+,}\label{eq:loss-initial}\\
  \define{L_{\star}&}{\textstyle{\sum_{x \in \batch \cap \StatesTarget}}\bigl(\upperSbound - \certificate_{\theta}(x)\bigr)^+,}\label{eq:loss-out}\\
  \define{L_{\downarrow}&}{\textstyle{\sum_{x \in \batch \cap (L^-_{\upperRAbound} \setminus L^-_{\lowerSbound})}}\bigl(\generator_\policy \certificate_{\theta}(x) + \generatorBoundRA\bigr)^+,}\label{eq:loss-decrease}\\
  \define{R&}{\lambda \textstyle{\prod_{i=0}^{N-1}}\lVert \mathbf{W}^{(i)} \rVert_{\infty\to\infty}.}\label{eq:loss-regularizer}
\end{align}
\endgroup

\subsection{Verification}
To verify that all of the properties of Definition~\ref{df:ras-C} hold, we employ interval bound propagation (IBP) \cite{xu2020automatic}.
IBP, as well as other abstract interpretation methods \cite{DBLP:conf/sp/GehrMDTCV18,kouvaros2021towards,zhang2018efficient}, have proven effective in verifying correctness of neural networks' output. 
To this end, we construct bounded certificate and generator networks that take intervals rather than point values as their inputs.
By propagating the input intervals through these networks, we obtain upper and/or lower bounds on the certificate and its generator.

To construct the interval inputs for IBP, we divide the set \(\States\) into cells \(\Cells\) which are axis-aligned bounding boxes (AABBs) covering the whole space \(\States\), that is, \(\States \subseteq \bigcup_{C \in \Cells} C\).
We do this by discretizing each dimension \([x_i^{\text{min}}, x_i^{\text{max}}]\), \(i=1, \dots, l\) of the state space \(\States\) into collections \(\mathcal{X}_i\) of \(m\) equal intervals, \(\define{\mathcal{X}_i}{\{[x_i^{\text{min}}, x_i^{\text{min}}+\Delta_i], \dots, [x_i^{\text{max}}-\Delta_i, x_i^{\text{max}}]\}}\), \(\define{\Delta_i}{(x_i^{\text{max}} - x_i^{\text{min}})/l}\), and using all of the possible Cartesian products thereof as \(m^l\) cells, \(\define{\Cells}{\mathcal{X}_1 \times \cdots \times \mathcal{X}_l}\).

For each cell, we compute the lower \(\certificate_{\text{low}}(C)\) and upper \(\certificate_{\text{up}}(C)\) bounds on \(\certificate(x)\) within that cell, \(x \in C\).
Given the bounds, we find the empirical reach-avoid probability \(\estProbRA\)
\begin{align}
  \define{\estProbRA &}{(1 - \estLowerRAbound / \estUpperRAbound)^+,}\label{eq:emp-ra-prob}&
  \text{where}\\
  \define{\estLowerRAbound &}{\max_{C : C \cap \StatesInitialFamily \neq \emptyset} \certificate_{\text{up}}(C),}&
  \define{\estUpperRAbound &}{\min_{C : C \cap \StatesUnsafeFamily \neq \emptyset} \certificate_{\text{low}}(C).}\nonumber
\end{align}
This verifies that the initial set values are at most \(\estLowerRAbound\), and the unsafe set values are at least \(\estUpperRAbound\).
Therefore, if the decrease condition is satisfied, the reach-avoid event \eqref{eq:reach-avoid} occurs with probability at least \(\estProbRA\).

Similarly, for the goal condition, we compute a provable stay probability \(\estProbS\) as
\begin{align}
  \define{\estProbS &}{(1 - \estLowerSbound / \estUpperSbound)^+,}\label{eq:emp-s-prob}&
  \text{where}\\
  \define{\estLowerSbound &}{\min_{C : C \cap \StatesTargetFamily \neq \emptyset} \certificate_{\text{up}}(C),}&
  \define{\estUpperSbound &}{\min_{C : C \cap \partial\StatesTargetFamily \neq \emptyset} \certificate_{\text{low}}(C).}\nonumber
\end{align}
This guarantees that at least in one of the cells within the target the certificate values go below \(\estLowerSbound\), and on the boundary the values are at least \(\estUpperSbound\).

Finally, we verify the decrease condition by computing upper bounds on the generator values \([\generator_\policy \certificate]_{\text{up}}(C)\) in the set
\begin{equation}\label{eq:C_2}
  \define{\Cells_2}{\bigl\{C \bigm| \certificate_{\text{low}}(C) > \estLowerSbound \wedge \certificate_{\text{up}}(C) \leq \estUpperRAbound\bigr\}}.
\end{equation}
We then check if any of the cells have non-negative upper bounds.
If they do (and we observed this happening often, as the generator bounds involve quadratic terms and as a result they are much looser than the value bounds) the decrease condition in those cells cannot be verified.
We split such cells in two along each of the dimensions. We then recompute the cell bounds for each of the \(2^l\) sub-cells.
We repeat this until either no counterexample is found, meaning that the decrease condition is verified (as it holds over a collection of cells fully covering the desired subspace), or the maximum iteration depth is achieved. 

Finally, if the estimated probabilities \(\estProbRA\) and \(\estProbS\) exceed the specified ones \(\probRA\) and \(\probS\), and no cells with \(\generator_\policy \certificate_{\text{up}}(C) \geq 0\) are found, we obtained a provably correct RAS-C.

\section{Experiments}\label{sec:experiments}

To illustrate the practical efficacy of our framework, we synthesize neural certificates for two continuous-time control benchmarks.
To the extent of our knowledge, this is the first work on formal verification of neural certificates for continuous-time stochastic control using neural networks.
Thus, we do not include comparisons to any other methods.

The computing infrastructure and values of the hyperparameters for all of the experiments and the neural network architectures are given in the technical appendix.

We employ \texttt{auto\_LiRPA} \cite{xu2020automatic} for IBP, which is a library for automatic bound computation with linear relaxation based perturbation analysis.
We integrate \eqref{eq:general_sde} numerically with stochastic Runge--Kutta method using \texttt{torchsde} \cite{li2020scalable}.
Note that this approximation scheme is not employed in Algorithm~\ref{alg:training}, but only to simulate the sample paths shown in the figures.

We successfully train RAS certificates for two controlled SDEs.
The times and numbers of verification phases required for training are summarized in Table~\ref{tbl:times}.
More details on each of the problems are given in the technical appendix.

\begin{table}[htb]
  \centering
  \begin{tabular}{lrrrr}
    \toprule
    & \multicolumn{2}{c}{Time per} & \multicolumn{2}{c}{Number of} \\
    & \multicolumn{2}{c}{verification} & \multicolumn{2}{c}{verifications}\\
    \cmidrule(lr){2-3} \cmidrule(l){4-5}{SDE} & mean & st. dev. & mean & st. dev.\\
    \midrule
    Inverted pendulum & 14.9 s & 6.4 s & 3.2 & 0.8 \\
    Bivariate GBM &  21.1 s & 14.4 s & 11.2 & 4.0 \\
    \bottomrule
  \end{tabular}
  \caption{%
    Runtimes for the experiments.
    Each data point is based on five randomly generated seeds.
    The seeds and data for each run are included in the code appendix.%
  }
  \label{tbl:times}
\end{table}

\subsection{Stochastic Inverted Pendulum}

Inverted pendulum \cite{chang2019neural,wu2023neural} is a classical nonlinear control problem.
We disturb the angle by a Wiener process with scale \(\sigma\), resulting in the following SDEs:
\begin{align*}
    \dd \pndVelocity_t &=  \bigl(\tfrac{\pndGravity}{\pndLength}  \sin \pndAngle_t + \tfrac{\pndMaxTorque u_t - \pndFriction \pndVelocity_t}{\pndMass \pndLength^2}\bigr)\,\dd t + \sigma\,\dd\WienerProcess_t,&
    \dd \pndAngle_t &= \pndVelocity_t \,\dd t,
\end{align*}
where \(\pndVelocity\) is the angular velocity.
The drift and diffusion are
\begin{align*}
\drift_\policy(\pndVelocity,\pndAngle) &= \begin{bmatrix}
\frac{\pndGravity}{\pndLength}  \sin \pndAngle + \frac{\pndMaxTorque \policy(\pndVelocity,\pndAngle) - \pndFriction \pndVelocity}{\pndMass \pndLength^2} \\
\pndVelocity
\end{bmatrix}, &
\diffusion_\policy(\pndVelocity,\pndAngle) &= \begin{bmatrix}
\sigma \\
0
\end{bmatrix}.
\end{align*}
The policy \(\policy(\pndVelocity,\pndAngle)\) is represented by a neural network.
An example of a trained certificate is presented in Figure~\ref{fig:certificate-example}.

\subsection{Bivariate Geometric Brownian Motion}

Geometric Brownian motion (GBM) is one of the most studied stochastic processes and is frequently utilized in financial mathematics.
We consider the following controlled version of bivariate GBM:
\begin{align*}
  \dd \stateProcess_t &= \bigl(\mu \stateProcess_t - u_t(\stateProcess_t)\bigr)\,\dd t + \sigma(\stateProcess_t)\,\dd \WienerProcess_{t}, &\text{where}\\
  \mu &= \begin{bmatrix}
    -0.5 & 1 \\
    -1 & -0.5 
  \end{bmatrix} \quad\text{and}\quad \sigma=0.2 \diag(\stateProcess_t).
\end{align*}
We use a stabilizing policy \(\policy(x) = -x\) which pushes the system towards the equilibrium point.
The resulting certificate is visualized in Figure~\ref{fig:certificate-bvgbm}.

\begin{figure}[htb]
  \centering
  \includegraphics[width=\columnwidth]{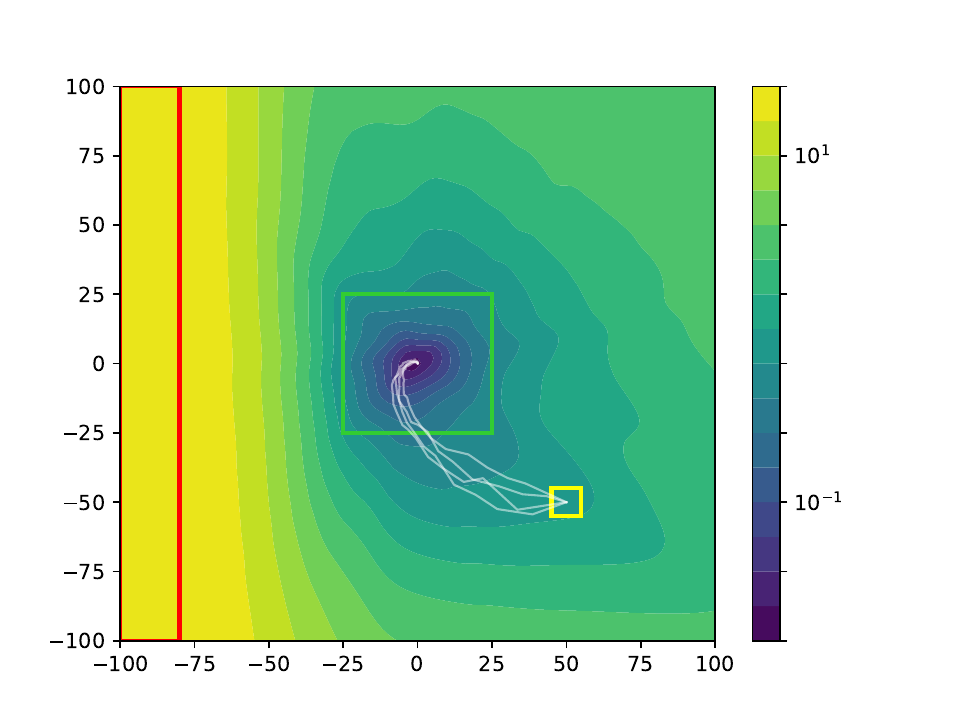}
  \caption{%
    An example of a neural supermartingale certificate for the bivariate GBM problem.%
  }
  \label{fig:certificate-bvgbm}
\end{figure}

\section{Related Work}\label{sec:related}

The concept of representing Lyapunov functions as neural networks was theoretically discussed in seminal work~\cite{long1993feedback,prokhorov1994lyapunov}. This idea led to the development of numerical machine learning algorithms for neural Lyapunov functions~\cite{serpen2005empirical,DBLP:conf/ijcnn/PetridisP06,noroozi2008generation}. 
Building on this, learning algorithms have been extended to encompass barrier functions for avoidance control~\cite{richards2018lyapunov,dawson2022safe}, contraction metrics~\cite{DBLP:conf/corl/SunJF20}, and the compositional certification of multi-agent systems~\cite{DBLP:conf/iclr/QinZCCF21,DBLP:conf/l4dc/ZhangXQF23}, and the simultaneous training of control policies 
and neural certificates~\cite{kolter2019learning,wu2023neural}, 
albeit initially without formal soundness guarantees. 
Our result provides formal soundness guarantees, drawing upon and extending the following related work.

\subsubsection{Neural Continuous-Time Lyapunov Certificates.}
Coupling machine learning with symbolic reasoning techniques like satisfiability modulo theory (SMT) solving and bound propagation made a significant step forward for neural certification with formal guarantees~\cite{chang2019neural,abate2020formal}.
Initially focused on Lagrangian and asymptotic stability, neural certificates with provable guarantees for continuous-time deterministic systems have since been extended to cover avoidance, reachability, invariance, persistence, and combinations of these properties~\cite{DBLP:conf/hybrid/ZhaoZC020,takeishi2021learning,DBLP:conf/nips/Zhang0V023,edwards4880686general}. 

\subsubsection{Neural Discrete-Time Lyapunov Certificates.} 
Advances in neural certificates for differential equations have inspired the development of their discrete-time counterparts. Initially applied to stability and reachability analysis of difference equations and hybrid systems~\cite{DBLP:conf/hybrid/ChenFMPP21}, neural discrete-time Lyapunov certificates have been extended to control under avoidance and persistence specifications~\cite{ANAND20232431,chen2021learning}, and have enabled symbolic reasoning with efficient neural network verifiers~\cite{yanglyapunov,mandal2024formally}. As reasoning about discrete-time systems is akin to reasoning about programs, neural certificates have also been applied to termination analysis~\cite{DBLP:conf/sigsoft/GiacobbeKP22}.

\subsubsection{Neural Discrete-Time Supermartingale Certificates.}
In the presence of stochastic uncertainty, traditional Lyapunov-like certificates are overly conservative and unrealistic; relying on worst-case reasoning, they do not quantify the probability of events to occur. Probabilistic neural certificates, which build on supermartingale-like conditions, were initially developed for almost-sure termination and, analogously, discrete-time reachability~\cite{abate2021learning,ansaripour2023learning}. These neural supermartingale certificates were later generalized to 
quantitative reachability, avoidance and reach-avoidance~\cite{DBLP:journals/csysl/MathiesenCL23,vzikelic2023learning,DBLP:conf/tacas/ChatterjeeHLZ23,badings2024learning}. Further advancements have studied the optimization of probability bounds and compositional certification~\cite{DBLP:conf/concur/AbateEGPR23,vzikelic2024compositional}.

\subsubsection{Neural Continuous-Time Supermartingale Certificates.} 
Neural supermartingale-based techniques have been studied primarily in the context of discrete-time systems. Their extension to continuous-time systems has been solely explored in the context of stability control, and even then, without provable guarantees~\cite{zhang2022neural}.

\section{Conclusion \& Future Work}\label{conclusions}

We have introduced the first neural supermartingale for continuous-time reasoning with provable guarantees, applicable to combinations of reachability, avoidance, and persistence properties. We have formulated a proof rule (Theorem~\ref{th:main-result}) for a supermartingale certificate for continuous-time systems and, by integrating machine learning with symbolic reasoning, we have fully automated their construction. We have built a prototype and demonstrated that our method is practically effective on continuous-time systems with neural policies, extending the state of the art in algorithmic verification for provably safe stochastic nonlinear control.

Our framework is open to extension towards the automated synthesis of neural controllers alongside their certificates, without the need to pre-initialize a control policy or guide it with other methods. We foresee integration with advanced symbolic reasoning techniques based on adaptive discretization~\cite{vzikelic2023learning,badings2024learning}, adversarial attack algorithms~\cite{yanglyapunov,wu2023neural}, and generalization to more expressive temporal logic requirements~\cite{DBLP:conf/cav/AbateGR24,DBLP:conf/aaai/NadaliM0024,neuralmc}.

\section*{Technical Appendix}\label{app:a}

This appendix contains a simple example of the problem at hand, the proofs of the lemmas presented in the paper, and implementation details for the experimental section.

\subsection{An Illustrative Example}

Consider the following stochastic differential equation:
\begin{equation}\label{eq:toy-example}
  \dd \stateProcess_t = a \stateProcess_t\,\dd t + \policy(\stateProcess_t)\,\dd\WienerProcess_t.
\end{equation}
When \(u \equiv 0\) there is no stochasticity to the problem, and the resulting deterministic system is unstable for any starting point \(x_0\), since the analytical solution \(x(t) = x_0 e^{at}\) to \(\dot x = a x\) with the initial condition \(x(0) = x_0\) tends to infinity.

Now consider the following control policy:
\begin{equation}\label{eq:toy-policy}
  \policy_\sigma(x) \equiv \sigma x.
\end{equation}
In other words, the controller adds a white noise proportional to the state.
The resulting system's dynamics are
\[
  \dd \stateProcess_t = a \stateProcess_t\,\dd t +  \sigma \stateProcess_t\,\dd\WienerProcess_t
\]
which is a geometric Brownian motion with drift \(a\) and volatility \(\sigma\).
It is a well-studied stochastic process and the solution for the initial condition \(\stateProcess_0 = x_0\) is
\[
  \stateProcess_t = x_0 e^{\bigl(a - \frac{\sigma^2}{2}\bigr)t + \sigma \WienerProcess_t}.
\]
If \(\sigma > \sqrt{2a}\), surprisingly (and even somewhat counterintuitively), the system becomes stabilized by pure noise injection.
This highlights the fundamentally different nature of continuous-time stochastic problems compared to their deterministic counterparts.
This phenomenon is illustrated by Figure~\ref{fig:toy-example}.

\begin{figure}[htb]
  \centering
  \includegraphics[width=\columnwidth]{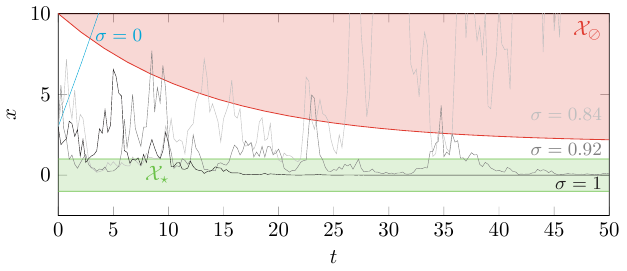}
  \caption{%
    Sample paths of three geometric Brownian motions solving \eqref{eq:toy-example} with \(a=0.4\) and policies \(\policy_\sigma\) given by \eqref{eq:toy-policy} when \(\sigma\) takes values of 1, 0.92, and 0.84, and 0.%
  }
  \label{fig:toy-example}
\end{figure}

Now consider three control policies with \(\sigma \in \{1.0, 0.92, 0.84\}\).
Let us restric our attention to the set \(\States = [-1,10]\).
Suppose we are given the following time-heterogenous unsafe states and time-homogenous targets:
\begin{align*}
  \StatesUnsafeFamily &= \bigl([2 + 8e^{-0.075t}, 10]\bigr)_{t\geq 0}
  &\text{and}&
  &\StatesTargetFamily &=\bigl([-1,1]\bigr)_{t\geq 0}.
\end{align*}

Figure~\ref{fig:toy-example} illustrates this problem and shows sample paths of the three systems.
The sample path under the policy \(\policy_{1}\) satisfies the reach-avoid-stay property at time 10;
the sample path under the policy \(\policy_{0.84}\) does not satisfy the stay property, but it does satisfy the reach-avoidability.
More interestingly, the policy \(\policy_{0.92}\) is stable (converges to the equilibrium) so it satisfies the stay property, but not the reach-avoid one.

This example inspects a single sample path for each process.
In practice, for each of the processes there exist (a.s.)  paths failing to satisfy the reach-avoid-stay criteria, especially when they start close to the unsafe set.
This is why we aim to verify RAS satisfaction probabilistically and restrict ourselves to some initial set.

The generator of a system driven by \eqref{eq:toy-example}--\eqref{eq:toy-policy} is
\[
  \generator = \frac{\partial}{\partial t}
  + a x \frac{\partial}{\partial x}
  + \frac{1}{2}(\sigma x)^2 \tfrac{\partial^2}{\partial x^2}.
\]

\subsection{Analytical Derivatives of a Neural Network}

We compute \(\tfrac{\partial \certificate}{\partial x}\) and \(\tfrac{\partial^2 \certificate}{\partial x^2}\) using the following statement.

\begin{proposition}[\citet{singla2020secondorder}, Lemma~1]\label{pr:hessian}
  Consider an \(N\)-layer neural network defined recursively for \(i=0,1,\dots,N-1\) as
  \[
    \mathbf{a}^{(-1)} = \mathbf{x},\quad
    \mathbf{z}^{(i)} = \mathbf{W}^{(i)} \mathbf{a}^{(i-1)} + \mathbf{b}^{(i)},\quad
    \mathbf{a}^{(i)} = \sigma_i(\mathbf{z}^{(i)}).
  \]
  The \(j\)-th row of the Hessian of its output \(\mathbf{z}^{(N-1)}\) with respect to the input \(\mathbf{x}\) can be computed via
  \begin{align*}
    \Hessian{\mathbf{x}}\mathbf{z}_j^{(N-1)} &= \sum_{i=0}^{N-2} (\mathbf{B}^{(i)})^\top \diag \bigl(\mathbf{F}_j^{(N-1, i)} \odot \sigma_i''(\mathbf{z}^{(i)})\bigr) \mathbf{B}^{(i)};\\
    \mathbf{B}^{(i)} & = \begin{cases}
      \mathbf{W}^{(0)}, &i = 0,\\
      \mathbf{W}^{(i)} \diag\bigl(\sigma_{i-1}'(\mathbf{z}^{(i-1)})\bigr)\mathbf{B}^{(i-1)}, &\i \geq 1;
    \end{cases}\\
    \mathbf{F}^{(k,i)} & = \begin{cases}
      \mathbf{W}^{(k)}, &i = k-1,\\
      \mathbf{W}^{(k)} \diag\bigl(\sigma_{k-1}'(\mathbf{z}^{(k-1)})\bigr)\\
      \quad \cdot \mathbf{F}^{(k-1,i)}, &i \leq k-2,
      \end{cases}
  \end{align*}
  and its Jacobian is equal to \(\mathbf{B}^{(N-1)}\).
\end{proposition}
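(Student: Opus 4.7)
The plan is to prove the Jacobian identity first by a direct induction on the layer index $i$, and then to obtain the Hessian by differentiating the Jacobian once more via the product rule across the resulting matrix chain. Writing $\mathbf{D}^{(i)} = \diag\bigl(\sigma_i'(\mathbf{z}^{(i)})\bigr)$, the chain rule applied to $\mathbf{z}^{(i)} = \mathbf{W}^{(i)} \sigma_{i-1}(\mathbf{z}^{(i-1)}) + \mathbf{b}^{(i)}$ gives $\partial \mathbf{z}^{(i)} / \partial \mathbf{x} = \mathbf{W}^{(i)} \mathbf{D}^{(i-1)}\, \partial \mathbf{z}^{(i-1)} / \partial \mathbf{x}$, with base case $\mathbf{W}^{(0)}$. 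A straightforward induction identifies this quantity with $\mathbf{B}^{(i)}$ as defined in the statement; setting $i = N-1$ proves the Jacobian claim.

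For the Hessian of a fixed output coordinate $\mathbf{z}_j^{(N-1)}$, I expand its gradient as the telescoping product
\[
  (\gradient{\mathbf{x}} \mathbf{z}_j^{(N-1)})^{\!\top} = \mathbf{e}_j^{\top} \mathbf{W}^{(N-1)} \mathbf{D}^{(N-2)} \mathbf{W}^{(N-2)} \mathbf{D}^{(N-3)} \cdots \mathbf{D}^{(0)} \mathbf{W}^{(0)}.
\]
Only the diagonal factors $\mathbf{D}^{(i)}$ depend on $\mathbf{x}$, so differentiating again produces a sum over $i \in \{0, \dots, N-2\}$, one term per layer. Using the pointwise identity $\partial \sigma_i'(\mathbf{z}^{(i)}) / \partial x_p = \sigma_i''(\mathbf{z}^{(i)}) \odot \mathbf{B}^{(i)}_{:,p}$, where $\mathbf{B}^{(i)}_{:,p}$ denotes the $p$-th column of $\mathbf{B}^{(i)}$, the $i$-th term evaluates, entrywise in $(p,q)$, to a quadratic form whose left row factor is $\mathbf{e}_j^{\top} \mathbf{W}^{(N-1)} \mathbf{D}^{(N-2)} \cdots \mathbf{W}^{(i+1)}$, whose right matrix factor is $\mathbf{B}^{(i)}$, and whose diagonal kernel carries the entries $\sigma_i''(\mathbf{z}^{(i)})$. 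A separate induction identifies the left row factor as the $j$-th row of $\mathbf{F}^{(N-1,i)}$, matching the recursion for $\mathbf{F}^{(k,i)}$ given in the statement. Absorbing this row into the diagonal kernel via the identity $\mathbf{u}^{\top}\diag(\mathbf{v}) \mathbf{w} = \mathbf{w}^{\top}\diag(\mathbf{u} \odot \mathbf{v})\mathbf{w}/\dots$ ---\,more precisely, noting that the scalar $(p,q)$-entry equals $\sum_k F^{(N-1,i)}_{j,k} \sigma_i''(\mathbf{z}^{(i)}_k) B^{(i)}_{k,p} B^{(i)}_{k,q}$\,--- collapses the expression to $(\mathbf{B}^{(i)})^{\top} \diag\bigl(\mathbf{F}_j^{(N-1,i)} \odot \sigma_i''(\mathbf{z}^{(i)})\bigr) \mathbf{B}^{(i)}$, and summing over $i$ yields the claimed formula.

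The main obstacle is bookkeeping rather than analysis: the product rule emits one term per intermediate layer, and each such term must be resolved into the claimed sandwich structure by (i) recognizing the right-of-$\mathbf{D}^{(i)}$ residual, together with the column of $\mathbf{B}^{(i)}$ produced by differentiating $\mathbf{z}^{(i)}$, as forming $\mathbf{B}^{(i)}$ in both slots of the quadratic form, and (ii) verifying that the left-of-$\mathbf{D}^{(i)}$ residual satisfies exactly the recursion defining $\mathbf{F}^{(k,i)}$. The latter step requires carefully peeling off one $\mathbf{W}^{(k)}\mathbf{D}^{(k-1)}$ factor at a time and comparing against the two-case definition; the former requires the Hadamard-diagonal rewrite that converts a row-weighted diagonal of activations into a symmetric bilinear form. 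Once this alignment is in place, the claimed identity follows by simple index manipulation.
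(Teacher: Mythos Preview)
The paper does not supply its own proof of this proposition: it is quoted verbatim as Lemma~1 of \citet{singla2020secondorder} and used as a black box for computing the derivative networks. So there is no in-paper argument to compare against.

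Your argument is correct and is the natural one. The Jacobian claim is exactly the chain rule unrolled into the recursion for $\mathbf{B}^{(i)}$, and for the Hessian you differentiate the explicit product $\mathbf{e}_j^{\top}\mathbf{W}^{(N-1)}\mathbf{D}^{(N-2)}\cdots\mathbf{D}^{(0)}\mathbf{W}^{(0)}$ once more, apply the product rule across the $\mathbf{D}^{(i)}$ factors (the only $\mathbf{x}$-dependent ones), and identify the left residual with $\mathbf{F}^{(N-1,i)}_{j,:}$ and the right residual with $\mathbf{B}^{(i)}$. The entrywise computation you give, $\sum_k F^{(N-1,i)}_{j,k}\,\sigma_i''(\mathbf{z}^{(i)}_k)\,B^{(i)}_{k,p}\,B^{(i)}_{k,q}$, is exactly the $(p,q)$-entry of $(\mathbf{B}^{(i)})^{\top}\diag\bigl(\mathbf{F}^{(N-1,i)}_{j}\odot\sigma_i''(\mathbf{z}^{(i)})\bigr)\mathbf{B}^{(i)}$, so the sandwich form drops out cleanly. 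The one cosmetic blemish is the half-written ``identity'' $\mathbf{u}^{\top}\diag(\mathbf{v})\mathbf{w}=\mathbf{w}^{\top}\diag(\mathbf{u}\odot\mathbf{v})\mathbf{w}/\dots$, which is not an identity as written; you immediately rescue it with the correct scalar expansion, but in a clean write-up you should just state the entrywise computation directly and drop the aborted line.
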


\begin{remark}
  Our neural network architecture is slightly different, with the final output \(\mathbf{a}^{(N)} = \sigma_N(\mathbf{z}^{(N)})\) instead of \(\mathbf{z}^{(N)}\).
  We obtain the formulae for our case by extending the network of Proposition~\ref{pr:hessian} with a final linear layer with \(\mathbf{W}^{(N+1)} = [1]\) an \(\mathbf{b}^{(N+1)} = 0\) in the calculations.
\end{remark}

Given this proposition, we find the first derivative vector as the transpose of the Jacobian, and the second derivative vector as the diagonal of the Hessian.

\subsection{Proof of Theorem~\ref{th:main-result}}

First, we need to ensure that \(\theStoppingTime\) is indeed a stopping time.

\begin{lemma}\label{lm:finite-stopping-time-RAS}
  Let \(\NaturalFiltration = (\NaturalAlgebra_t)_{t\geq 0}\) be the natural filtration with respect to a state process \((\trajectoryProcess{\policy}{0}{x_0}{t})_{t\geq 0}\) issuing in some state \(x_0 \in \StatesInitial\).
  Consider a continuous function \(\certificate(t, x) : \NonNegativeReals \times \States \to \Reals\).
  For any constants \(\lowerSbound,\upperRAbound \in \Reals\) such that \(\beta \leq \rho\), the random variable \(\theStoppingTime\) given by \eqref{eq:the-interval-stopping-time} is an \(\NaturalFiltration\)-stopping time.
  Moreover, if the decrease and stay conditions of Definition~\ref{df:ras-C} hold, then \(\theStoppingTime < \infty\) (a.s.).
\end{lemma}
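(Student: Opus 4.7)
The statement has two parts: measurability of $\theStoppingTime$ as an $\NaturalFiltration$-stopping time, and its almost-sure finiteness under the decrease and stay conditions. I would handle each separately using standard tools from stochastic analysis.

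For the stopping-time claim, my plan is to exploit path continuity. Under the Lipschitz regularity of $\drift_\policy, \diffusion_\policy$ the SDE \eqref{eq:general_sde} admits a solution with continuous sample paths, and since $\certificate$ is jointly continuous on $\NonNegativeReals \times \States$, the real-valued composite process $Z_t \triangleq \certificate(t, \trajectoryProcess{\policy}{0}{x_0}{t})$ is continuous and $\NaturalFiltration$-adapted. I would split $\theStoppingTime = \theStoppingTime_1 \wedge \theStoppingTime_2$, where $\theStoppingTime_1 = \inf\{t \geq 0 : Z_t \geq \upperRAbound\}$ is the first entry into a closed half-line (a stopping time of $\NaturalFiltration$ via the standard rational-approximation argument for continuous adapted processes) and $\theStoppingTime_2 = \inf\{t \geq 0 : Z_t < \lowerSbound\}$ is the first entry into an open half-line (a stopping time because the natural filtration of the Feller--Dynkin process is right-continuous after the usual augmentation).

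For the almost-sure finiteness, the plan is a Dynkin-formula argument driven by the divergence hypotheses \eqref{eq:zeta-integral}--\eqref{eq:gamma-integral}. On the event $\{\theStoppingTime > t\}$, the trajectory $(s, \trajectoryProcess{\policy}{0}{x_0}{s})$ for $s \leq t$ lies in $L^-_{\upperRAbound}(\certificate) \setminus \interior L^-_{\lowerSbound}(\certificate)$; splitting into whether it lies in $\StatesTargetFamily$ or not, Condition \ref{cnd:stay} or Condition \ref{cnd:decr} applies pointwise, giving $\generator_\policy \certificate(s, \trajectoryProcess{\policy}{0}{x_0}{s}) \leq -\min\bigl(\generatorBoundRA(s), \generatorBoundS(s)\bigr)$. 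Applying Dynkin's formula at the bounded stopping time $t \wedge \theStoppingTime$ and using nonnegativity (Condition \ref{cnd:nonneg}) of $\certificate$ yields
\[
  0 \leq \expect\bigl[\certificate(t \wedge \theStoppingTime, \trajectoryProcess{\policy}{0}{x_0}{t \wedge \theStoppingTime})\bigr] \leq \certificate(0, x_0) - \expect\!\left[\int_0^{t \wedge \theStoppingTime}\! \min\bigl(\generatorBoundRA(s), \generatorBoundS(s)\bigr)\, \dd s\right].
\]
If $\probability[\theStoppingTime = \infty] > 0$, then on that event monotone convergence forces the integral to diverge as $t \to \infty$, contradicting the uniform bound $\certificate(0, x_0) < \infty$; hence $\probability[\theStoppingTime < \infty] = 1$.

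The main technical obstacle I anticipate is that \eqref{eq:zeta-integral} and \eqref{eq:gamma-integral} assert divergence of $\int \generatorBoundRA$ and $\int \generatorBoundS$ \emph{individually}, but not automatically of $\int \min(\generatorBoundRA, \generatorBoundS)$. This is harmless in the constant-bound regime used in practice (and flagged in the remark after Definition~\ref{df:ras-C}), but in the general case one should either assume divergence of the minimum directly or argue separately on excursions across the boundary of $\StatesTargetFamily$ using the two regimes. A secondary concern is the rigorous justification of Dynkin's formula at $t \wedge \theStoppingTime$, which is covered by the boundedness of this stopping time combined with the $C^{1,2}$-smoothness of $\certificate$ and compactness of $\States$, ensuring the local martingale part is a true martingale and hence integrates to zero.
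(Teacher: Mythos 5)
Your proof is correct and, while it reaches the same two conclusions, it routes around the paper's own argument in instructive ways. For the stopping-time claim, the paper invokes the d\'ebut theorem once for the first hit time of the measurable set \(\{\certificate < \lowerSbound\} \cup \{\certificate \geq \upperRAbound\}\); your decomposition \(\theStoppingTime = \theStoppingTime_1 \wedge \theStoppingTime_2\) into a hitting time of a closed set (handled by path continuity and rational approximation) and of an open set (handled by right-continuity of the augmented filtration) is more elementary and makes explicit which regularity assumptions on \(\NaturalFiltration\) are actually being used --- the d\'ebut theorem needs the usual conditions anyway, so nothing is lost. For finiteness, your Dynkin-formula argument is essentially a from-scratch proof of the recurrence criterion (Khasminskii, Theorem~3.9) that the paper cites as a black box; the underlying mechanism is identical. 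The most valuable part of your write-up is the obstacle you flag: the pointwise generator bound obtainable from combining the decrease and stay conditions on \(\{\lowerSbound \leq \certificate < \upperRAbound\}\) is \(\generator_\policy \certificate \leq -\bigl(\generatorBoundRA(t) \wedge \generatorBoundS(t)\bigr)\), i.e.\ the \emph{minimum}, whereas the divergence hypotheses are stated for \(\int \generatorBoundRA\) and \(\int \generatorBoundS\) individually. The paper's proof sets \(\varsigma(t) = \generatorBoundRA(t) \vee \generatorBoundS(t)\) (the maximum), which makes the divergence step \(\int \varsigma \geq \int \generatorBoundRA\) trivial but for which the pointwise bound \(\generator_\policy \certificate \leq -\varsigma(t)\) does not follow from the two conditions separately; with the correct choice \(\varsigma = \generatorBoundRA \wedge \generatorBoundS\) the divergence of \(\int_0^t \varsigma(s)\,\dd s\) is a genuine additional assumption in the time-inhomogeneous case (vacuous for the positive constants used in practice, as the remark after Definition~\ref{df:ras-C} anticipates). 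You have identified a real soft spot in the paper's own argument, and your suggested fixes (assume divergence of the minimum, or argue excursion-by-excursion) are the right ones.
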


In proving Lemma~\ref{lm:finite-stopping-time-RAS}, we will employ the following result.

\begin{proposition}[recurency citerion, cf. \citet{khasminskii2011stability}, Theorem 3.9]\label{pr:recurrence}
 A Feller--Dynkin process \((\eta_t)_{t\geq 0}\) with infinitesimal generator \(\generator\) leaves a domain \(\mathbb{U}\) in finite time (a.s.) if it is regular (i.e., defined a.s. for all \(t\geq 0\)) and there exists in \(\NonNegativeReals \times \mathbb{U}\) a non-negative function \(\certificate(t, x)\), twice continuously differentiable with respect to \(x\) and continuously differentiable with respect to \(t\), such that \(\generator \certificate(t,x) \leq -\varsigma(t)\) for some non-negative function \(\varsigma(t)\) satisfying
  \[\lim_{t \to \infty}\int_0^t \varsigma(s)\,\dd s = \infty.\]
\end{proposition}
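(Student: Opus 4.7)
The plan is to apply Dynkin's formula to the candidate function $\certificate$ stopped at the first exit time from $\mathbb{U}$, and then to combine the bound $\generator \certificate \leq -\varsigma$ with the non-negativity of $\certificate$ so as to obtain a uniform bound on $\expect\bigl[\int_0^{t\wedge \tau_{\mathbb{U}}} \varsigma(s)\,\dd s\bigr]$ that contradicts the divergence hypothesis on $\{\tau_{\mathbb{U}} = \infty\}$.

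First I would introduce $\define{\tau_{\mathbb{U}}}{\inf\{t \geq 0 \mid \eta_t \notin \mathbb{U}\}}$. This is a stopping time because $(\eta_t)$ is right-continuous (being Feller--Dynkin) and $\mathbb{U}$ is open. By regularity, $\eta_t$ is defined a.s. for all $t \geq 0$, so $\tau_{\mathbb{U}}$ takes values in $[0, \infty]$, and the task reduces to showing $\probability[\tau_{\mathbb{U}} = \infty] = 0$.

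Next I would invoke Dynkin's formula at the bounded stopping time $t \wedge \tau_{\mathbb{U}}$. Since $\certificate$ is $C^{1,2}$ on $\NonNegativeReals \times \mathbb{U}$, it lies in the domain of the extended generator of the process restricted to $\mathbb{U}$; if $\certificate$ is not \emph{a priori} bounded on $\mathbb{U}$, I would first localize via $\tau_n = \inf\{t : \certificate(t, \eta_t) \geq n\} \wedge n$, apply the formula up to $t \wedge \tau_{\mathbb{U}} \wedge \tau_n$, and pass to the limit $n \to \infty$. Using $\certificate \geq 0$ and monotone convergence (the needed inequalities all go the right way), this yields
\[
  \expect\bigl[\certificate(t \wedge \tau_{\mathbb{U}}, \eta_{t \wedge \tau_{\mathbb{U}}})\bigr]
  = \certificate(0, \eta_0)
  + \expect\!\left[\int_0^{t \wedge \tau_{\mathbb{U}}}\!\! \generator \certificate(s, \eta_s)\,\dd s\right],
\]
and combining $\generator \certificate \leq -\varsigma$ with $\certificate \geq 0$ gives the core inequality
\[
  \expect\!\left[\int_0^{t \wedge \tau_{\mathbb{U}}}\!\! \varsigma(s)\,\dd s\right] \leq \certificate(0, \eta_0).
\]
Finally, argue by contradiction: if $\probability[\tau_{\mathbb{U}} = \infty] > 0$, then on this event $t \wedge \tau_{\mathbb{U}} = t$, so $\int_0^{t \wedge \tau_{\mathbb{U}}} \varsigma(s)\,\dd s = \int_0^t \varsigma(s)\,\dd s \uparrow \infty$ by hypothesis. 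Monotone convergence then forces the left-hand side above to blow up in $t$, contradicting the finite uniform bound $\certificate(0, \eta_0)$. Hence $\probability[\tau_{\mathbb{U}} = \infty] = 0$.

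The main obstacle is the rigorous application of Dynkin's formula when $\certificate$ and $\generator \certificate$ are not \emph{a priori} integrable up to $t \wedge \tau_{\mathbb{U}}$. The localization sketched above is standard but requires that the auxiliary stopping times $\tau_n$ exhaust the sample paths, i.e.\ $\tau_n \to \infty$ a.s.\ on $\{\tau_{\mathbb{U}} = \infty\}$; verifying this relies on the local boundedness of $\certificate$ on compact subsets of $\mathbb{U}$ together with regularity of $\eta$. Everything else is a routine application of Fatou/monotone convergence combined with the divergence hypothesis on $\varsigma$.
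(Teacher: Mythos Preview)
The paper does not actually prove this proposition: it is quoted as Theorem~3.9 of \citet{khasminskii2011stability} and used as a black box in the proof of Lemma~\ref{lm:finite-stopping-time-RAS}. Your argument is the standard one (Dynkin's formula at the truncated stopping time, non-negativity of $\certificate$ to bound $\expect[\int_0^{t\wedge\tau_{\mathbb{U}}}\varsigma(s)\,\dd s]$, then monotone convergence on $\{\tau_{\mathbb{U}}=\infty\}$) and is essentially how Khasminskii proves it, so it is correct and aligned with the intended reference.
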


\begin{remark}
  Note that Proposition~\ref{pr:recurrence} requires the process to be regular which under our assumptions follows from continuity \cite[p.~75]{khasminskii2011stability}.
\end{remark}

\begin{proof}[Proof of Lemma~\ref{lm:finite-stopping-time-RAS}]
  Note that because \(\certificate\) is continuous and therefore preserves Borel-measurability, both of the events \(\{\certificate(t, \trajectoryProcess{\policy}{0}{x_0}{t}) < \lowerSbound\}\) and \(\{\certificate(t, \trajectoryProcess{\policy}{0}{x_0}{t}) \geq \upperRAbound\bigr\}\) are Borel-measurable;
  therefore, their union is also measurable.
  Thus, \(\theStoppingTime\) is the first hit time of a measurable set, and by the d\'ebut theorem it is a stopping time.
  Consider next a domain \(\mathbb{U} = \bigl\{x \bigm| \lowerSbound \leq \certificate(t, x) < \upperRAbound\bigr\}\).
  The process \((\trajectoryProcess{\policy}{0}{x_0}{t})_{t \geq 0}\) has a generator \(\generator_\pi\) which satisfies on \(\mathbb{U}\) the condition \(\generator_\policy \certificate(t,x) \leq -\varsigma(t)\) for \(\varsigma(t) = \generatorBoundRA(t) \vee \generatorBoundS(t)\) due to the decrease and stay conditions.
  Since
  \[\int_0^t \varsigma(s)\,\dd s = \int_0^t \generatorBoundRA(s) \vee \generatorBoundS(s)\,\dd s \geq \int_0^t \generatorBoundRA(s)\,\dd s,\]
  the conditions of Proposition~\ref{pr:recurrence} are satisfied and the stopping time \(\theStoppingTime\) is finite (a.s.).
\end{proof}

Next, we use the stopping time \(\theStoppingTime\) to construct the following supermartingale.

Non-negative supermartingales such as \(\theStoppingTime\)-RAS-SM have the following properties useful to us.

\begin{proposition}[optional stopping theorem, cf. \citet{gall2016brownian}, Theorem 3.25]\label{pr:optional-stopping}
  Let \((\eta_t)_{t\geq 0}\) be a non-negative supermartingale with right-continuous sample paths with respect to some filtration \((\Algebra)_{t\geq 0}\).
  Let \(\tau_1\) and \(\tau_2\) be two stopping times such that \(\tau_1 \leq \tau_2\).
  Then, \(\eta_{\tau_1}\) and \(\eta_{\tau_2}\) are in \(L_1\) and
  \[
    \eta_{\tau_1} \geq \expect [\eta_{\tau_2}\mid \Algebra_{\tau_1}].
  \]
\end{proposition}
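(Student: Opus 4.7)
The plan is to prove this classical statement by the standard two-step approximation: first establish the bounded case via the discrete-time optional stopping theorem, then remove the boundedness using Fatou's lemma, leveraging non-negativity throughout.

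First I would handle the \(L^1\) claim. For any stopping time \(\tau\), the truncations \(\tau \wedge n\) are bounded, so the discrete-time supermartingale property applied at the finitely-valued stopping times \(0 \leq \tau \wedge n\) gives \(\expect[\eta_{\tau \wedge n}] \leq \expect[\eta_0] < \infty\). By non-negativity and right-continuity, \(\eta_{\tau \wedge n} \to \eta_\tau\) almost surely on \(\{\tau < \infty\}\), and Fatou's lemma then yields \(\expect[\eta_\tau] \leq \expect[\eta_0]\), so both \(\eta_{\tau_1}\) and \(\eta_{\tau_2}\) lie in \(L^1\).

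Next I would establish the inequality for bounded stopping times \(\tau_1 \leq \tau_2 \leq T\). Approximate each from above by the dyadic stopping times \(\tau_i^{(n)} = 2^{-n} \lceil 2^n \tau_i \rceil \wedge (T+1)\). These take only finitely many values, satisfy \(\tau_1^{(n)} \leq \tau_2^{(n)}\), and decrease to \(\tau_i\). For finitely-valued stopping times, the inequality \(\eta_{\tau_1^{(n)}} \geq \expect[\eta_{\tau_2^{(n)}} \mid \Algebra_{\tau_1^{(n)}}]\) follows by a short backward induction along the finite value grid using only the discrete-time supermartingale property. Right-continuity gives \(\eta_{\tau_i^{(n)}} \to \eta_{\tau_i}\) almost surely as \(n \to \infty\), and passing to the limit via conditional Fatou (controlled by the \(L^1\) bound from the previous step) preserves the inequality.

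For unbounded stopping times, apply the bounded case to \(\tau_1 \wedge n \leq \tau_2 \wedge n \leq n\). For any \(A \in \Algebra_{\tau_1}\), the set \(A \cap \{\tau_1 \leq n\}\) lies in \(\Algebra_{\tau_1 \wedge n}\), so integrating the bounded-case inequality over this set and letting \(n \to \infty\), with dominated convergence on the left (using \(\eta_{\tau_1} \in L^1\)) and Fatou on the right (using non-negativity), yields \(\int_A \eta_{\tau_1}\,\dd \probability \geq \int_A \eta_{\tau_2}\,\dd \probability\) for every \(A \in \Algebra_{\tau_1}\). Since \(\eta_{\tau_1}\) is \(\Algebra_{\tau_1}\)-measurable by adaptedness and right-continuity, this is equivalent to the claimed conditional inequality. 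The main obstacle I anticipate is the measurability bookkeeping in this last step, in particular verifying the identity \(A \cap \{\tau_1 \leq n\} \in \Algebra_{\tau_1 \wedge n}\) and aligning the conditioning \(\sigma\)-algebras as \(n\) varies, which ultimately rests on standard but delicate properties of filtrations of stopping times under right-continuity of \(\filtration\).
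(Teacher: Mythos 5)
The paper does not prove this proposition: it is imported as a classical result, cited to Le~Gall (Theorem~3.25), and used as a black box in the proof of Theorem~\ref{th:main-result}. So there is no in-paper argument to compare against. Your proposal reconstructs the standard textbook proof (essentially Le~Gall's own), and its architecture --- discrete optional stopping on dyadic upper approximations, right-continuity to pass to the limit, truncation at \(n\) for unbounded stopping times, with the \(A \cap \{\tau_1 \leq n\} \in \Algebra_{\tau_1 \wedge n}\) bookkeeping at the end --- is the right one.

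One step fails as written, though. In the bounded case you have \(\int_A \eta_{\tau_1^{(n)}}\,\dd\probability \geq \int_A \eta_{\tau_2^{(n)}}\,\dd\probability\) for \(A \in \Algebra_{\tau_1}\) and want to let \(n \to \infty\). Fatou applied to the non-negative right-hand side gives \(\liminf_n \int_A \eta_{\tau_2^{(n)}}\,\dd\probability \geq \int_A \eta_{\tau_2}\,\dd\probability\), which is what you need there; but on the left-hand side Fatou yields \(\int_A \eta_{\tau_1}\,\dd\probability \leq \liminf_n \int_A \eta_{\tau_1^{(n)}}\,\dd\probability\) --- the wrong direction. To close the chain you need \(\lim_n \int_A \eta_{\tau_1^{(n)}}\,\dd\probability \leq \int_A \eta_{\tau_1}\,\dd\probability\), i.e., uniform integrability (hence \(L^1\)-convergence) of \((\eta_{\tau_1^{(n)}})_n\); an ``\(L^1\) bound'' on the limit does not control this. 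The standard repair is to note that, by the discrete optional stopping theorem, \(\eta_{\tau_1^{(n+1)}} \geq \expect[\eta_{\tau_1^{(n)}} \mid \Algebra_{\tau_1^{(n+1)}}]\), so \((\eta_{\tau_1^{(n)}})_n\) is a backward supermartingale over the decreasing filtration \((\Algebra_{\tau_1^{(n)}})_n\) with expectations bounded by \(\expect[\eta_0]\), and the backward supermartingale convergence theorem supplies the uniform integrability. A second, smaller omission: on \(\{\tau_i = \infty\}\) the variable \(\eta_{\tau_i}\) must be defined as the a.s.\ limit \(\eta_\infty\), whose existence requires the non-negative supermartingale convergence theorem; your argument silently restricts to \(\{\tau < \infty\}\). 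Both gaps are repairable by standard results (and are moot in the paper's application, where the certificate is bounded on a compact domain so dominated convergence applies), but as stated the limit passage does not go through on Fatou alone.
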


\begin{corollary} For such a process \((\eta)_{t\geq 0}\),
  \(\expect[\eta_{\tau_1}] \geq \expect [\eta_{\tau_2}].\)
  \label{cr:decreasing-expectation}
\end{corollary}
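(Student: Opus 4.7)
The plan is to obtain this as an immediate consequence of Proposition~\ref{pr:optional-stopping}. The hypotheses on \((\eta_t)_{t\geq 0}\) and on the stopping times \(\tau_1 \leq \tau_2\) are inherited verbatim from that proposition, so I can directly apply it to conclude both that \(\eta_{\tau_1}, \eta_{\tau_2} \in L_1\) and that \(\eta_{\tau_1} \geq \expect[\eta_{\tau_2} \mid \Algebra_{\tau_1}]\) (a.s.).

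From here I would simply take unconditional expectations of both sides. By monotonicity of expectation, the almost-sure inequality gives \(\expect[\eta_{\tau_1}] \geq \expect\bigl[\expect[\eta_{\tau_2} \mid \Algebra_{\tau_1}]\bigr]\). Applying the tower property of conditional expectation collapses the right-hand side to \(\expect[\eta_{\tau_2}]\), which yields the claim.

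There is essentially no obstacle here: the only thing to be careful about is that both random variables lie in \(L_1\), so that the expectations and conditional expectations are well-defined and the tower property applies unambiguously. This is guaranteed by the conclusion of Proposition~\ref{pr:optional-stopping} itself, so the proof is a one-line consequence.
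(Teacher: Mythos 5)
Your argument is correct and matches the paper's proof exactly: the paper also derives the corollary by taking expectations in Proposition~\ref{pr:optional-stopping} and invoking the law of total expectation (the tower property). Nothing is missing.
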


\begin{proof}
  Follows immediately by applying the law of total expectation to the statement of Proposition~\ref{pr:optional-stopping}.
\end{proof}

\begin{proposition}[Chebyshev's inequality, cf. \citet{stein2009real}, p.~91]\label{pr:Markov}
  If \(f\) is a non-negative function, \((\eta_t)_{t\geq 0}\) is a stochastic process such that \(\expect\bigl[f(\eta_t)\bigr]\) exists, and \(r > 0\), then 
  \[
    \probability \bigl\{f(\eta_t) \geq r \bigr\} \leq \frac{\expect \bigl[ f(\eta_t) \bigr]}{r}.
  \]
\end{proposition}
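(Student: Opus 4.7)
The plan is to establish the standard pointwise bound between the non-negative random variable $f(\eta_t)$ and a scaled indicator of the event in question, then pass to expectations. This is the classical Markov-type argument and there is no substantive obstacle; the only care needed is with measurability and with the role of the non-negativity assumption on $f$.

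First, I would fix $t$ and $\omega \in \Universum$ and split into two cases. On the event $\{f(\eta_t) \geq r\}$, the indicator $\indicator{\{f(\eta_t) \geq r\}}$ equals one and $f(\eta_t) \geq r$ by definition, so $f(\eta_t) \geq r \cdot \indicator{\{f(\eta_t) \geq r\}}$. On the complement, the indicator vanishes and the same inequality reduces to $f(\eta_t) \geq 0$, which holds precisely because $f$ is assumed non-negative. Combining both cases gives the almost-sure pointwise bound
\[
  f(\eta_t) \geq r \cdot \indicator{\{f(\eta_t) \geq r\}}.
\]

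Next, I would take expectations on both sides. This step is justified by monotonicity of expectation together with the hypothesis that $\expect[f(\eta_t)]$ exists; the right-hand side is trivially integrable since $\expect[\indicator{\{f(\eta_t) \geq r\}}] = \probability\{f(\eta_t) \geq r\} \leq 1$. Pulling the constant $r$ out of the expectation yields
\[
  \expect[f(\eta_t)] \geq r \cdot \probability\{f(\eta_t) \geq r\}.
\]
Finally, since $r > 0$, dividing through by $r$ produces the claimed inequality. Implicit in the whole argument is that $\{f(\eta_t) \geq r\}$ is measurable, which is automatic once one assumes $f$ is Borel measurable; this is already needed for $\expect[f(\eta_t)]$ to be well defined, so no additional hypothesis is required. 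Since the argument is essentially a three-line textbook derivation, the ``hard part'' is really only the bookkeeping to state the pointwise bound cleanly in the case $f(\eta_t) < r$, where non-negativity of $f$ is indispensable.
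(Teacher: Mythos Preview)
Your argument is correct and is the standard textbook derivation of Markov's inequality. Note, however, that the paper does not actually supply a proof of this proposition: it is stated as a cited result (with the reference to \citet{stein2009real}, p.~91) and invoked as a tool in the proof of Theorem~\ref{th:main-result}, so there is no ``paper's own proof'' to compare against. Your write-up would serve perfectly well as the omitted justification.
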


\begin{proposition}[Maximal inequality for non-negative supermartingale, cf. \citet{prajna2004stochastic}, Lemma 6]\label{lm:maximal}
  Given a filtration \(\filtration = (\Algebra)_{t\geq 0}\), let \((\eta_t)_{t\geq 0}\)  be a non-negative \(\filtration\)-supermartingale with right-continuous sample paths.
  Then for every \(r > 0\),
  \[
    \probability \Bigl\{\sup_{t \geq 0} \eta_t \geq r \Bigr\} \leq \frac{\expect[\eta_0]}{r}.
  \]
\end{proposition}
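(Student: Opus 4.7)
The plan is to follow the classical martingale route via first-passage times and the optional stopping theorem, approximating from below to handle the supremum carefully. First I would fix $r > 0$ and, for each $a \in (0, r)$, introduce the first-passage time $\tau_a \triangleq \inf\{t \geq 0 : \eta_t \geq a\}$. Because $\eta$ is $\filtration$-adapted with right-continuous sample paths, the hitting time of the closed set $[a, \infty)$ is an $\filtration$-stopping time by the débût theorem (invoked analogously to the proof of Lemma~\ref{lm:finite-stopping-time-RAS}). Right-continuity further yields that on $\{\tau_a < \infty\}$ we have $\eta_{\tau_a} \geq a$: by definition of the infimum there is a sequence $t_n \downarrow \tau_a$ with $\eta_{t_n} \geq a$, and passing to the right-hand limit preserves the inequality.

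Next I would apply Corollary~\ref{cr:decreasing-expectation} (which specializes Proposition~\ref{pr:optional-stopping}) to the pair of bounded stopping times $0 \leq \tau_a \wedge T$ for arbitrary $T > 0$, obtaining $\expect[\eta_0] \geq \expect[\eta_{\tau_a \wedge T}]$. I would then split by whether $\tau_a \leq T$: on $\{\tau_a \leq T\}$, $\eta_{\tau_a \wedge T} = \eta_{\tau_a} \geq a$, while on the complement, $\eta_{\tau_a \wedge T} = \eta_T \geq 0$ by nonnegativity. This gives the basic bound
\[
  \expect[\eta_0] \;\geq\; \expect[\eta_{\tau_a \wedge T}] \;\geq\; a \cdot \probability\{\tau_a \leq T\},
\]
hence $\probability\{\tau_a \leq T\} \leq \expect[\eta_0]/a$.

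Taking $T \to \infty$ and using monotone convergence on the increasing event sequence $\{\tau_a \leq T\} \uparrow \{\tau_a < \infty\}$ yields $\probability\{\tau_a < \infty\} \leq \expect[\eta_0]/a$. Since for any $a < r$ every sample path with $\sup_{t \geq 0}\eta_t \geq r$ satisfies $\eta_t \geq a$ at some finite time (so $\tau_a < \infty$), we obtain $\probability\{\sup_{t \geq 0}\eta_t \geq r\} \leq \expect[\eta_0]/a$. Letting $a \uparrow r$ then gives the desired $\expect[\eta_0]/r$ bound.

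The main obstacle I anticipate is precisely this last step: the event $\{\sup_t \eta_t \geq r\}$ may in principle include sample paths whose supremum is only approached asymptotically but never attained, in which case $\tau_r$ itself could fail to be finite, so one cannot simply apply optional stopping at $\tau_r$. Introducing the auxiliary levels $a < r$ and then passing to the limit is what sidesteps this issue, and the right-continuity hypothesis is used twice — once to guarantee $\tau_a$ is a stopping time, and once to propagate the threshold inequality to the stopped value $\eta_{\tau_a}$.
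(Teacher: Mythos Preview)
Your argument is correct and is the standard derivation of Doob's maximal inequality via first-passage times and optional stopping; the approximation $a \uparrow r$ is the right way to capture paths whose supremum is only approached, and your use of right-continuity (for the d\'ebut theorem and for $\eta_{\tau_a} \geq a$) is exactly where that hypothesis enters.

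There is nothing to compare against: the paper does not supply its own proof of this proposition but merely quotes it from \citet{prajna2004stochastic} as a black-box tool. One minor remark on efficiency: the version of optional stopping stated in the paper as Proposition~\ref{pr:optional-stopping} (from Le~Gall) already applies to arbitrary, possibly unbounded stopping times for nonnegative right-continuous supermartingales, so you could in fact apply Corollary~\ref{cr:decreasing-expectation} directly to the pair $0 \leq \tau_a$ without the truncation at $T$ and the subsequent monotone-convergence step. Your more elementary route through bounded stopping times is of course equally valid and makes the argument self-contained.
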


Finally, armed with all the necessary properties of an RAS-SM, we are ready to prove Theorem~\ref{th:main-result}.

\begin{proof}[Proof of Theorem~\ref{th:main-result}]
  Consider the supermartingale \((Y_t)_{t\geq 0}\) of Lemma~\ref{lm:rasm-is-supermartingale}.
  By its construction and the non-negativity condition for RASMs, it is non-negative, and by the assumptions of Theorem~\ref{th:main-result} it is continuous, since the RAS-C \(\certificate\) is twice continuously differentiable with respect to \(x\) and continuously differentiable with respect to \(t\), both implying continuity.
  Therefore, \(\expect[Y_{\tau}] \leq \expect [Y_{0}]\) for any stopping time \(\tau\) by Corollary~\ref{cr:decreasing-expectation}.
  Note that 
  \begin{align}
    Y_{\tau} &= \certificate(\tau \wedge \tau, \trajectoryProcess{\policy}{0}{x_0}{\tau \wedge \tau}) = \certificate(\tau , \trajectoryProcess{\policy}{0}{x_0}\tau), &\text{and}\label{eq:L-tau}\\
    Y_{0} &= \certificate(0, \trajectoryProcess{\policy}{0}{x_0}{0}) = \certificate(0, x_0).\label{eq:L-0}
  \end{align} 
  The initial condition of Definition~\ref{df:ras-C} implies
  \begin{equation}
    \expect\bigl[\certificate(0, \trajectoryProcess{\policy}{0}{x_0}{0})\bigr] = \certificate(0,x_0) \leq \lowerRAbound.\label{eq:L-0-bound}
  \end{equation}
  Equations \eqref{eq:L-tau}--\eqref{eq:L-0-bound} imply 
  \[\expect\bigl[\certificate(\tau , \trajectoryProcess{\policy}{0}{x_0}{\tau})\bigr] \leq \expect\bigl[\certificate(0, \trajectoryProcess{\policy}{0}{x_0}{0})\bigr] = \certificate(0,x_0) \leq \lowerRAbound.\]
  By Proposition~\ref{pr:Markov}, this implies
  \[
    \probability \bigl\{\certificate(\tau, \trajectoryProcess{\policy}{0}{x_0}{\tau}) \geq \upperRAbound \bigr\} \leq \tfrac{1}{\upperRAbound}\expect\bigl[\certificate(\tau , \trajectoryProcess{\policy}{0}{x_0}{\tau})\bigr] \leq \tfrac{\lowerRAbound}{\upperRAbound}.
  \]
  
  Now consider the stopping time \(\theStoppingTime\) of Lemma~\ref{lm:finite-stopping-time-RAS}.
  By its construction, either
  \[
    \certificate(\theStoppingTime, \trajectoryProcess{\policy}{0}{x_0}{\theStoppingTime}) \leq \lowerSbound
    \quad\text{or}\quad
    \certificate(\theStoppingTime, \trajectoryProcess{\policy}{0}{x_0}{\theStoppingTime}) > \upperRAbound.
  \]
  Since \(\lowerSbound \leq \upperRAbound\), these events are incompatible, and thus
  \begin{align*}
    \probability\bigl\{\certificate(\theStoppingTime, \trajectoryProcess{\policy}{0}{x_0}{\theStoppingTime}) &\leq \lowerSbound\bigr\}\\
    &= 1 - \probability\bigl\{\certificate(\theStoppingTime, \trajectoryProcess{\policy}{0}{x_0}{\theStoppingTime}) \geq \upperRAbound\bigr\}\\
    &\geq 1 - \tfrac{\lowerRAbound}{\upperRAbound} = \probRA.
  \end{align*}
  Since \(\certificate(t, \trajectoryProcess{\policy}{0}{x_0}{t}) \leq \upperRAbound\) for all \(t < \theStoppingTime\) by the definition of the stopping time \(\theStoppingTime\), it follows from the safety condition that \(\trajectoryProcess{\policy}{0}{x_0}{t} \neq \StatesUnsafe\) for all \(t < \theStoppingTime\).
  Thus,
  \[
    \probability\Bigl\{\bigl(\trajectoryProcess{\policy}{0}{x_0}{\theStoppingTime} \in L_{\lowerSbound}^-(\certificate)\bigr) \wedge \bigl(\forall t < \theStoppingTime: \trajectoryProcess{\policy}{0}{x_0}{t} \notin \StatesUnsafe\bigr) \Bigr\} \geq \probRA.
  \]
  Because \(L_{\lowerSbound}^-(\certificate) \subset L_{\upperSbound}^-(\certificate) \subset \StatesTarget\) by construction and the goal condition, this event is a subset of the reach-avoid event \(E_{\text{RA}}\) of \eqref{eq:reach-avoid}; therefore, the reach-avoid property is satisfied with probability at least \(\probRA\).
  
  Next, we prove that the stay property is satisfied as well.
  
Using Proposition~\ref{lm:maximal} and shifting the time index by \(\theStoppingTime\) (which is possible because the process is Markovian),
  \[
    \probability \Bigl\{\sup_{t \geq \theStoppingTime} \certificate(t, \trajectoryProcess{\policy}{0}{x_0}{t}) \geq \upperSbound \Bigr\} \leq \tfrac{1}{\upperSbound}\expect\bigl[\certificate(0, \trajectoryProcess{\policy}{0}{x_0}{\theStoppingTime})\bigr] \leq \tfrac{\lowerSbound}{\upperSbound}.
  \]
  The opposite event can be written as
  \begin{align*}
    \Bigl\{\sup_{t \geq \theStoppingTime} &\certificate(t, \trajectoryProcess{\policy}{0}{x_0}{t}) < \upperSbound \Bigr\} = \\
    &\Bigl\{\forall t \geq \theStoppingTime: \certificate(t, \trajectoryProcess{\policy}{0}{x_0}{t}) < \upperSbound \Bigr\}\\
    &\cap \Bigl\{\nexists \gamma < \upperSbound: \forall t \geq \theStoppingTime: \certificate(t, \trajectoryProcess{\policy}{0}{x_0}{t}) < \gamma \Bigr\}.
  \end{align*}
  Because \(\probability [A] \geq \probability [A \cap B]\) for any \(A\) and \(B\), this implies
  \[
    \probability \Bigl\{\forall t \geq \theStoppingTime: \certificate(t, \trajectoryProcess{\policy}{0}{x_0}{t}) < \upperSbound \Bigr\} 
    \geq 1 - \tfrac{\lowerSbound}{\upperSbound} = \probS.
  \]
  Again, \(L_{\lowerSbound}^-(\certificate) \subseteq \StatesTarget\) due to the goal condition.
  Thus, this event is a subset of the stay event \(E_{\text{S}}\) of \eqref{eq:stay},  which means that the stay part of the specification is satisfied with probability at least \(\probS\).
\end{proof}

\subsection{Proof of Lemma~\ref{lm:rasm-is-supermartingale}}

Lemma~\ref{lm:rasm-is-supermartingale} is proven using the following proposition.

\begin{proposition}[first exit process is a supermartingale, cf. \citet{khasminskii2011stability}, Lemma 5.1]\label{pr:exit-supermartingale}
  Let \(\certificate(t,x)\) be a function twice continuously differentiable with respect to \(x\), continuously differentiable with respect to \(t\) on the set \(\NonNegativeReals \times \mathbb{U}\) for a bounded domain \(\mathbb{U}\).
  Moreover, in this set \(\generator_\policy \certificate \leq 0\).
  Let \(\tau_\mathbb{U}\) be the first exit time from \(\mathbb{U}\).
  Then the process \(\certificate(t \wedge \tau_{\mathbb{U}}, \trajectoryProcess{\policy}{0}{x_0}{t \wedge \tau_{\mathbb{U}}})\) is a supermartingale.
\end{proposition}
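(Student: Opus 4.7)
The plan is to derive the supermartingale property via It\^o's formula applied to \(\certificate(t, \trajectoryProcess{\policy}{0}{x_0}{t})\), combine it with the non-positivity of \(\generator_\policy \certificate\) on \(\mathbb{U}\), and upgrade the stopped stochastic integral to a true martingale using compactness of \(\overline{\mathbb{U}}\). Adaptedness and integrability of the stopped process will follow from continuity of \(\certificate\), the fact that \(t \wedge \tau_\mathbb{U}\) is a stopping time, and boundedness of \(\certificate\) on the compact set \(\overline{\mathbb{U}}\).

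First, I would invoke It\^o's formula on the composition \(\certificate(t, \trajectoryProcess{\policy}{0}{x_0}{t})\); the stated smoothness of \(\certificate\) on \(\NonNegativeReals \times \mathbb{U}\) together with the SDE \eqref{eq:general_sde} is exactly what the formula requires. The resulting identity expresses the increment over \([0, t]\) as the sum of (i) a Lebesgue time-integral whose integrand equals \(\generator_\policy \certificate\) evaluated along the trajectory, and (ii) an It\^o stochastic integral against the Wiener process whose integrand involves \(\gradient{x}\certificate \cdot \diffusion_\policy\) evaluated along the trajectory.

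Second, I would replace the deterministic horizon \(t\) by the bounded stopping time \(t \wedge \tau_\mathbb{U}\). For every \(s < \tau_\mathbb{U}\), the trajectory \(\trajectoryProcess{\policy}{0}{x_0}{s}\) lies in \(\mathbb{U}\), so the hypothesis \(\generator_\policy \certificate \leq 0\) makes the time-integral contribution pathwise non-positive up to the stopping time.

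Third, the main obstacle is to promote the stopped stochastic integral from a local martingale to an honest martingale, since only then does conditioning annihilate it. This is precisely where boundedness of \(\mathbb{U}\) becomes essential: on the compact closure \(\overline{\mathbb{U}}\), continuity of \(\gradient{x}\certificate\) and, by admissibility of \(\policy\), of \(\diffusion_\policy\), yields uniform boundedness, so the integrand of the stopped It\^o integral on \([0, t \wedge \tau_\mathbb{U}]\) is uniformly bounded. The It\^o isometry then certifies square-integrability, and the stopped stochastic integral is a genuine zero-mean \(L^2\) martingale.

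Finally, for \(0 \leq s \leq t\), conditioning the stopped It\^o identity on \(\NaturalAlgebra_s\) eliminates the martingale term and preserves the sign of the non-positive drift, leaving
\[
  \expect\bigl[\certificate(t \wedge \tau_\mathbb{U}, \trajectoryProcess{\policy}{0}{x_0}{t \wedge \tau_\mathbb{U}}) \mid \NaturalAlgebra_s\bigr] \leq \certificate(s \wedge \tau_\mathbb{U}, \trajectoryProcess{\policy}{0}{x_0}{s \wedge \tau_\mathbb{U}}),
\]
which is the required supermartingale inequality. Adaptedness of the stopped process is immediate because \(t \wedge \tau_\mathbb{U}\) is a stopping time and \(\certificate\) is continuous, while integrability follows from boundedness of \(\certificate\) on \(\overline{\mathbb{U}}\); together these complete the verification.
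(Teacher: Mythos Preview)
Your argument via It\^o's formula is the standard proof of this classical fact and is essentially how Khasminskii establishes his Lemma~5.1, which the paper cites rather than re-proving. The paper treats this proposition purely as an external reference and provides no proof of its own, so there is no in-paper argument to compare against; your outline matches the cited source.

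One small technical point: you appeal to uniform boundedness of \(\gradient{x}\certificate\) and \(\diffusion_\policy\) on the compact closure \(\overline{\mathbb{U}}\), but the proposition as stated only grants smoothness of \(\certificate\) on the open set \(\NonNegativeReals \times \mathbb{U}\), so \(\gradient{x}\certificate\) need not extend continuously to \(\partial\mathbb{U}\). In the paper's actual application this causes no trouble, since the certificate is assumed smooth on the larger set \(\PositiveReals \times \States \supset \NonNegativeReals \times \overline{\mathbb{U}}\); in the abstract setting one would either add that hypothesis or insert a routine localisation (exhaust \(\mathbb{U}\) by compacta \(\mathbb{U}_n \Subset \mathbb{U}\), get the supermartingale property for each \(\tau_{\mathbb{U}_n}\), and pass to the limit via Fatou).
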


\begin{proof}[Proof of Lemma~\ref{lm:rasm-is-supermartingale}]
  Follows immediately from Proposition~\ref{pr:exit-supermartingale} and boundedness of \(\States\) and therefore any of its subsets.
\end{proof}

\subsection{Proof Sketch for Corollary~\ref{cor:ra-s-separately}}

Since both proofs are very similar to the proof of Theorem~\ref{th:main-result}, we do not present their full versions, but restrict ourselves to a proof sketch.

\begin{proof}[Sketch of a proof for Corollary~\ref{cor:ra-s-separately}]
  The case of staying is already part of the Theorem~\ref{th:main-result} proof.
The proof for reach-avoidance without staying follows the steps of Theorem~\ref{th:main-result} proof for a stopping time
\begin{equation*}
  \define{\theStoppingTime}{\inf_{t \geq 0}\bigl\{t \bigm| \certificate(t, \trajectoryProcess{\policy}{0}{x_0}{t}) \notin L^-_{\upperRAbound}(\certificate)\setminus \interior\StatesTargetFamily \bigr\}.}
\end{equation*}
It is also similar to the proof of discrete-time reach-avoidance done by \citet{vzikelic2023learning}.
\end{proof}

\subsection{Computing Infrastructure}

We conducted the experiments on MacBook Pro (Model: 14'' 2021, CPU: Apple M1 Max, RAM: 32~GB, OS: macOS Sonoma 14.5).
The experiments were run using Python 3.11.7.
The names and versions of the libraries we used are included in the code appendix in \texttt{requirements.txt}.

\subsection{Hyperparameter Values}

For the inverted pendulum, we use a policy pre-trained with \texttt{torchRL} for the deterministic version of the problem.
The script used for training and the saved policy are both included in the code appendix.
The policy network consists of two linear layers with hyperbolic tangent activations, followed by a final linear layer.
The hidden layers consist of 64 neurons.

For the certificate network, the architecture is the same, but with the addition of a softplus activation at the end to make the values nonnegative.
The hidden layers contain 32 neurons.

The values of the remaining hyperparameters are summarized in Table~\ref{tbl:hyper} and can be found in the code appendix.

\begin{table}[htb]
 \centering
 \caption{Hyperparameter values.}
 \begin{tabular}{lcrr}
 \toprule
 Parameter && GBM & Pendulum\\
 \midrule
  verification frequency & \(q\) & 1000 & 1000\\
  batch size & \(n\) & 256 & 256 \\
  verifier mesh size & \(m\) & 200 & 400 \\
  generator threshold & \(\generatorBoundRA\) & 1.0 & 1.0\\
  regularizer multiplier & \(\lambda\) & \(10^{-1}\) & \(10^{-1}\) \\
  verification slack & \(\kappa\) & 4 & 4\\
  maximum verification depth & \(k\) & 2 & 5\\
  optimizer && \texttt{Adam} & \texttt{Adam} \\
  optimizer learning rate && \(10^{-3}\) & \(10^{-3}\) \\
 \bottomrule
 \end{tabular}
 \label{tbl:hyper}
\end{table}

\subsection{Experimental Specifications}

Both experiments use RAS specifications with \(\probRA = 0.9\) and \(\probS=0.9\), and the sets defined as follows.

\subsubsection{Inverted Pendulum.} The dynamics of the deterministic inverted pendulum are given by the following equation:
\[\ddot\pndAngle = \tfrac{\pndGravity}{\pndLength}  \sin \pndAngle + \tfrac{\pndMaxTorque u - \pndFriction \dot\pndAngle}{\pndMass \pndLength^2},\]
where \(\pndAngle\) is the current angle, \(\pndGravity = 9.81\) is the standard acceleration due to gravity, \(\pndLength = 0.5\) is the pendulum length, \(\pndMass = 0.15\) is the ball mass, \(b = 0.1\) is the coefficient of friction, and \(\pndMaxTorque = 6\) is the maximum torque (so that \(\lvert u \rvert \leq 1\)).

The angular velocity \(\pndVelocity\) of the pendulum in the deterministic case is \(\pndVelocity = \dot\pndAngle\).
The agent observes both the angular velocity \(\pndVelocity\) and the angle \(\pndAngle\), that is, \(x = [\pndVelocity, \pndAngle]^\top\).

The specification sets are:
\begin{align*}
  \States &= [-20,20] \times [-2\pi, 2\pi], &
  \StatesInitial &= [-1, 1] \times [\tfrac{3\pi}{4}, \tfrac{5\pi}{4}],
\end{align*}
\[\StatesTarget = [-4, 4] \times [-\tfrac{\pi}{2}, \tfrac{\pi}{2}],\]
\[
  \StatesUnsafe = \bigl([-20, -10] \times [-2\pi, -\tfrac{3\pi}{2}]\bigr) \cup \bigl([10, 20] \times [\tfrac{3\pi}{2}, 2\pi]\bigr).
\]

\subsubsection{GBM.} The specification sets are:
\begin{align*}
  \States &= [-100,100]^2, &
  \StatesInitial &= [45, 55] \times [-55, -45],\\
  \StatesTarget &= [-25,25]^2, &
  \StatesUnsafe &= [-100,-80] \times [-100, 100].
\end{align*}

\section{Acknowledgements}

This research was funded by the NWO Veni grant Explainable Monitoring (222.119), and the Advanced Research + Invention Agency (ARIA) under the Safeguarded AI programme. This work was done in part while Anna Lukina was visiting the Simons Institute for the Theory of Computing.

\bibliography{Neustroev}

\end{document}